\documentclass[a4paper,11pt]{article}
\usepackage{jheppub} 
\usepackage{lineno}
\usepackage{slashed}
\usepackage{amsmath}     
\usepackage{amssymb}  
\usepackage{microtype}
\usepackage{tabularx}

\usepackage{graphicx}
\usepackage{booktabs}
\usepackage{array}
\usepackage{multirow}
\usepackage{slashed}
\usepackage{physics}
\usepackage{braket}
\newcommand{\openone}{\mathbb{I}}
\usepackage{booktabs}
\usepackage{ragged2e}

\usepackage{amsthm}
\theoremstyle{plain}
\newtheorem{prop}{Proposition}

\numberwithin{equation}{section}

\usepackage{physics}

\newtheorem{lemma}{Lemma}
\newtheorem{cor}{Corollary}
\newtheorem{remark}{Remark}

\usepackage{amsmath,amssymb,amsthm}
\theoremstyle{definition}

\title{\boldmath Spin-charge deconfinement from a self-consistent dressing of Fierz-complete $(1+1)$d Dirac fermions}

\author{L. Haddad, K. Gonzales, J. Kahanek, M. Kolding, J. Maguire, V. Palombi, and H. Truelson}
\affiliation{Department of Physics, Colorado School of Mines,\\
1523 Illinois Street, Golden, CO 80401, U.S.A.}

\emailAdd{lhaddad@mines.edu}

\abstract{We develop a self-consistent dressing for $(1+1)$d Dirac fermions with Fierz-complete four-fermion interactions, extending the spin-charge formalism of~\cite{Haddad2024}. By studying the associated composite connection, we characterize obstructions to trivializing the Dirac operator and prove a trivialization theorem under which spin-charge separation, half-infinite Wilson-line dressing, and flat-connection holonomy arise as regime limits of the single dressing. In particular, the chiral-difermion transition is a deconfinement transition for the spin and charge degrees of freedom, mediated by an emergent $\mathfrak{sl}(2,\mathbb{R})$ gauge field and measured by a Wilson-loop area law in the chiral phase. We show that our formulation provides a realization of the conjectured Faddeev--Niemi link between spin-charge separation and confinement. }

\begin{document}
\maketitle
\flushbottom

\section{Introduction}
\label{sec:introduction}

Spin-charge separation (SCS) is a feature of strongly correlated fermion systems in one dimension, originally identified for non-relativistic electrons in metals and developed within the context of Luttinger-liquid theory~\cite{Tomonaga1950,Luttinger1963,LutherEmery1974,Haldane1981,Giamarchi2004}. In the SCS regime, the electron ceases to exist as an elementary excitation and is replaced by independent spinon and chargon collective modes, the original fermion appearing only as a vertex operator built from bosonized phase fields. Whereas the standard derivation identifies SCS in the long-wavelength fluctuations around the linearized Fermi sea, in the present construction SCS arises in the fluctuations around the mean-field condensate of the difermion channel, with the bosonization derivation recovered as a special case (section~\ref{sec:spin-charge}). A related construction appears in gauge theory, i.e., the Dirac--Lavelle--McMullan dressing~\cite{Dirac1955,LavelleMcMullan1997,Mandelstam1968,Wilson1974,Schwinger1962}, in which a bare fermion is multiplied by a half-infinite Wilson line, producing a gauge-invariant physical charged state. These gauge-theoretic dressings have been systematized in the dressing field method~\cite{FournelFrancoisLazzariniMasson2014,FrancoisRavera2024JHEP,Ravera2026DFMLectures}, a general framework for constructing gauge-invariant composite fields out of connections; our aim is complementary, the extraction of emergent structure from a four-fermion theory rather than the reduction of a gauge symmetry. A third construction is the parallel transport of fermions in non-abelian backgrounds, organized by the holonomy of a principal-bundle connection and controlled on flat connections by the non-abelian Poincar\'e lemma~\cite{DemessieSaemann2014,HaydysIntroGauge}.

The three constructions that we treat, SCS, Wilson lines, and holonomy, look superficially different but their algebraic content is the same: a bare fermion field is multiplied by the exponential of an integrated one-form taking values in an appropriate dressing group. A construction is most useful when the underlying connection has a particular geometric structure (flatness) decoupling between commuting sectors, or integrability of the path-ordered exponential. The differences lie in the physical interpretation of the connection, the regime in which the construction is useful, and the technical tool used.

The identification of SCS with a gauge-theoretic confinement problem has a history in condensed matter physics. In the slave-particle formulation of the t-J model, spin and charge separate when the spinons and holons of a compact lattice gauge theory deconfine~\cite{NagaosaLee1990,LeeNagaosaWen2006}. Mudry and Fradkin showed that in one dimension the spinons of this gauge picture are always confined, and that the physical carriers of the separate spin and charge quantum numbers are solitons rather than the slave particles~\cite{MudryFradkin1994a,MudryFradkin1994b}. The construction developed in the present paper is consistent with that result: the binding field that we derive below is a non-compact connection built from the phase of a physical condensate, and the deconfined carriers in the paired phase are bosonic collective modes. This is consistent with the soliton resolution of~\cite{MudryFradkin1994a,MudryFradkin1994b} and with the vertex-operator structure of bosonization.

A highly developed attempt in this area is the program of Faddeev, Niemi, and collaborators in~\cite{NiemiWalet2005,ChernodubLandauGauge,FaddeevNiemi2007}, and in follow-up work~\cite{DittmannHeinzlWipf2001,vanBaalWipf2001,TsurumaruTsutsuiFujii2000} (which, in places, we will refer to collectively by \emph{CFN} for Cho-Faddeev-Niemi, or simply by \emph{Niemi}). The proposal is that SCS might extend to the low-energy domain of four-dimensional non-abelian gauge theory and play a role in color confinement. There, an SU(2) Yang--Mills gauge field is decomposed into a chiral spin field $\hat n(x)$ and a charged sector. The low-energy effective theory then takes the form of coupled O(3) and Grassmannian sigma models with the string solitons of this theory conjectured to be the constituent quarks of the confining phase.

The Niemi proposal has been pursued but only partially realized, mostly in that no explicit SCS-decomposed fermion analogous to the Luttinger spinon or chargon has been constructed in the 4D setting~\cite{ShabanovExtended,DittmannHeinzlWipf2001}. Without such a construction, the claim that SCS in gauge theory is structurally identical to SCS in Luttinger liquids remains at the level of an analogy. However, this is plausibly specific to the four-dimensional non-abelian gauge-theory setting. It is natural to ask whether the Faddeev--Niemi vision admits a concrete realization in other settings.

Our present work is motivated by results from a recent paper~\cite{Haddad2024} which shows that SCS follows from relativistic Dirac fermions in $(1+1)$ dimensions coupled to a chiral channel and a dynamical difermion field. Spinors are shown to decompose through a dressing as $\psi = \chi\Phi$, with each factor transforming under a distinct linear representation of the relevant Lorentz subgroup. The full in-medium symmetry group takes a product form through an Iwasawa decomposition, with the Lorentz group $\mathrm{SL}(2,\mathbb{R})$ of the elementary fermion becoming enhanced to a product $\mathrm{SL}(2,\mathbb{R})_1 \times \mathrm{SL}(2,\mathbb{R})_2$, acting separately on the two spinor components. This is the analog of the Faddeev--Niemi projective Lorentz action.

Thus, in some respects, the construction of~\cite{Haddad2024} realizes the Faddeev--Niemi idea in a setting that is restricted to the strong-pairing regime. The question we address in the present paper is whether the construction extends from this regime to a more comprehensive framework covering the full phase diagram of the Fierz-complete model. We also ask whether the three apparently distinct dressing constructions (Luttinger, Wilson-line, flat-connection) become unified in such an extension. Indeed, within our framework, we find that the three constructions arise as regime-specific projections of a single dressing matrix $U(x)$. The regime boundaries are characterized by which projection becomes singular.

The paper is organized as follows. Section~\ref{sec:self-consistent-spinors} sets up the self-consistent dressing and proves the trivialization theorem (Proposition~\ref{prop:trivialization}). Section~\ref{sec:local-fierz-complete} specializes to the Fierz-complete potential, develops the Magnus expansion for nearly-flat connections, and computes the non-abelian corrections to the emergent gauge field. Section~\ref{sec:interpretations} develops the three representations of the dressing and the Wilson-loop deconfinement diagnostic. Section~\ref{sec:synthesis} discusses the Niemi correspondence and its limits, and section~\ref{sec:conclusion} concludes. Two appendices, \ref{DiracConventions}, \ref{meanfield1}, present Dirac conventions and standard mean-field analysis.

\section{Fermion bilinears, quartic interactions, and Fierz identities}
\label{Fierz}

Generally, contact interactions in Dirac systems come in a variety of possible forms distinguished by their transformation under the Poincar\'e group. In this section, we will use the real representation for the gamma matrices
\begin{eqnarray}
\gamma^0 = \sigma_1, \;\;  \gamma^1 = -i \sigma_2, \;\; \gamma^5 =  \gamma^0 \gamma^1 = \sigma_3.
\end{eqnarray}
Quartic interactions are constructed from fermion bilinears which are local composite operators of the form
\begin{equation}
    \mathcal{O}(x) = \bar{\psi}(x)\, \Gamma\, \psi(x),
\end{equation}
where $\Gamma$ is a matrix in the Clifford algebra, such as $\mathbb{I}, \, \gamma^\mu,\,  \gamma^5,  \, \gamma^\mu \gamma^5, \, C$, with each bilinear product transforming under an irreducible representation of the Lorentz (or Poincar\'e) group. The key point is that this allows us to classify such products based on spin and parity. Standard bilinears along with their properties are listed in Table~\ref{bilinears}.
\begin{table}[h!]
\centering
\renewcommand{\arraystretch}{1.2}
\begin{tabularx}{\textwidth}{@{}lll >{\RaggedRight\arraybackslash}X @{}}
\toprule
Bilinear & Operator & Lorentz Transformation & Physical Interpretation \\ \midrule
Scalar & $\bar{\psi} \psi$ & Lorentz scalar & Parity-even condensate \\
Pseudoscalar & $\bar{\psi} \gamma^5 \psi$ & Lorentz scalar & Parity-odd condensate \\
Vector & $\bar{\psi} \gamma^\mu \psi$ & Lorentz vector & Conserved current \\
Axial vector & $\bar{\psi} \gamma^\mu \gamma^5 \psi$ & Axial vector & Chiral current \\
Difermion (scalar) & \( \psi^T C \psi \) & Scalar & Superconducting channel \\ \bottomrule
\end{tabularx}
\caption{Transformation properties and interpretation of common fermion bilinears under the Lorentz group in $(1+1)$d.}
\label{bilinears}
\end{table}
Such Lorentz-invariant operators have explicit form
\begin{equation}
\mathcal{L}_\textrm{int} = g \left( \bar{\psi} \Gamma \psi \right) \left( \bar{\psi} \Gamma \psi \right) \, ,
\end{equation}
with suitable contractions between spacetime indices to ensure that the operator transforms as a Lorentz scalar. The key distinction between fermion bilinears and their associated four-fermion interactions lies in their respective transformation properties and physical roles. Fermion bilinears transform non-trivially under the Lorentz group and serve as local order parameters, symmetry currents, or dynamical probes of fermionic degrees of freedom. In contrast, four-fermion interactions are composite scalar operators built from bilinear products and describe effective interactions among fermion pairs. The latter are required to be Lorentz scalars and thus involve contracted bilinears in appropriate representations. The classification of four-fermion interactions is thus determined by the quantum numbers of the bilinears they contain. 

Different four-fermion channels are related via Fierz transformations, which allow a given interaction to be re-expressed in different bilinear bases. This plays a critical role in identifying dominant condensation channels in models such as the Gross--Neveu or Nambu--Jona-Lasinio models. The practical point is that the reduced spinor and Clifford algebra structure in two dimensions simplifies the possible operator basis and results in strong constraints from Fierz identities. Explicitly, the Fierz rearrangement identity for bilinear contractions is
\begin{eqnarray}
(\bar{\psi}_i \Gamma^A \psi_j )(\bar{\psi}_k \Gamma_A \psi_l)
= -\frac{1}{2} (\bar{\psi}_i \Gamma^A \psi_l)(\bar{\psi}_k \Gamma_A \psi_j) \, ,
\end{eqnarray}
where $\Gamma^A$ runs over the Clifford basis: $\mathbb{I}, \,\gamma^\mu, \,\gamma^5$, and the field subscripts denote distinct field labels. Applying this to four-fermion interactions of a single flavor field $\psi$, we obtain
\begin{eqnarray}
(\bar{\psi} \gamma^\mu \psi)(\bar{\psi} \gamma_\mu \psi)
= - (\bar{\psi} \psi)^2 + (\bar{\psi}  \gamma^5 \psi)^2, \label{Fierz1} \\
(\bar{\psi} \gamma^\mu \gamma^5 \psi)(\bar{\psi} \gamma_\mu \gamma^5 \psi)
= - (\bar{\psi} \psi)^2 + (\bar{\psi}  \gamma^5 \psi)^2. \label{Fierz2}
\end{eqnarray}
Thus, the vector and axial-vector interaction channels are not independent in $(1+1)$ dimensions. In fact, all four-fermion interactions can be reduced to a basis consisting of scalar and difermion channels
\begin{eqnarray}
\mathcal{L}_{\text{int}} &=& g_s (\bar{\psi} \psi)^2 - g_d (\bar{\psi}  \gamma^5 \psi)^2 \\
&=& g_s (\bar{\psi} \psi)^2 - g_d (\bar{\psi}^C \psi)(\bar{\psi} \psi^C) \, ,
\end{eqnarray}
where in the second line we have related the pseudoscalar channel to the difermion operator via charge conjugation. The Fierz relations may be worked out directly in terms of the chiral components $\psi_{R,L} = \frac{1}{2}(1 \pm \gamma^5)\psi$. For instance, expanding expressions such as 
\begin{eqnarray}
(\bar{\psi} \psi)^2 =   ( \psi_R^\dagger \psi_L + \psi_L^\dagger \psi_R )^2 \, , \;\; \;\; 
(\bar{\psi} i \gamma^5 \psi)^2 =   ( \psi_R^\dagger \psi_L - \psi_L^\dagger \psi_R )^2 \, , 
\end{eqnarray}
while properly accounting for Grassmann anti-symmetrization, leads to eq.~\eqref{Fierz1}-\eqref{Fierz2}. Hence, such interactions naturally encode pairing structures between the left and right chiral modes, relevant to bosonization and renormalization group analyses in $(1+1)$-dimensional systems. Table~\ref{comparison} summarizes the main qualitative points regarding bilinears and their associated interactions. Given that the scalar interaction controls chiral-symmetry breaking and the difermion channel controls BCS-like pairing, one would expect that a combination of the two channels could form the basis for a simplified model of QCD in $(1+1)$d, at least in terms of its main qualitative features of its ground states and finite density phases transitions.
\begin{table}[h]
\centering
\renewcommand{\arraystretch}{1.3}
\begin{tabularx}{\textwidth}{@{}l X X@{}}
\toprule
Aspect & Fermion Bilinear &   Four-Fermion Interaction \\
\midrule
Clifford structure & $\mathbb{I}, \, \gamma^\mu, \, \gamma^5$ & Products of bilinears \\
Transformation & Irreducible Lorentz rep & Scalar composite (contractions) \\
Role & Source, order parameter & Interaction vertex \\
Symmetry breaking & Explicit or spontaneous & Drives condensate formation \\
Examples & 
\( \bar{\psi} \psi \), \( \bar{\psi} \gamma^\mu \psi \), \( \psi^T C \psi \) & 
\( (\bar{\psi} \psi)^2 \), \( (\bar{\psi} \gamma^\mu \psi)^2 \), \( | \psi^T C \psi |^2 \) \\
Redundancy & Linearly independent & Related via Fierz identities \\
\bottomrule
\end{tabularx}
\caption{Comparison between fermion bilinears and four-fermion interactions in $(1+1)$ dimensions.}
\label{comparison}
\end{table}

\section{Dressed spinors}
\label{sec:self-consistent-spinors}

In the present work we are interested in reframing Fierz complete models of Dirac fermions in $(1+1)$ dimensions. Such theories may be generically identified with the Lagrangian density for $N$ species of interacting fermions (for Dirac fermion conventions and a review of four-fermion interactions, see \autoref{DiracConventions} and~\autoref{Fierz})
\begin{eqnarray}
\hspace{-1pc} \mathcal{L}  &=& \sum_{a}  \bar{\psi}_a   \left( i \gamma^\mu  \partial_\mu + \mu \gamma^0  \right)   \psi_a  + \frac{g_s}{2}  \left(  \sum_a  \bar{\psi}_a     \psi_a  \right)^2  + \frac{g_d}{2}  \sum_{a, b} \left(   \bar{\psi}_a  C   \bar{\psi}^T_b  \right)  \left(   \psi^{T}_b   C     \psi_a   \right) ,  \label{Lag}
\end{eqnarray}
where the physical parameters $\mu$, $g_s$ and $g_d$ are the chemical potential, scalar and difermion couplings, respectively, and $C$ is the charge conjugation operator. Summation over flavor indices $a, \, b$, is shown where we adopt an off-diagonal pairing convention for the difermion term so that we may safely choose $C$ to be a symmetric matrix. Classical results are accurate when $N \gg 1$ in which case fermions may acquire a dynamically generated mass and difermion expectation value through the interactions with associated density, chiral, and difermion condensates (for background, see Refs.~\cite{Dirac1955,LavelleMcMullan1997}, and \autoref{meanfield1} for a review of mean-field analysis)
\begin{eqnarray}
\rho = \langle \bar{\psi } \gamma^0 \psi \rangle \, , \;\;\; \sigma =  \langle \bar{\psi} \psi \rangle \, , \;\;\;  \Delta = \langle  \psi^T C \psi \rangle   \, , \;\;\;  \bar{\Delta}   = \langle  \bar{\psi} C \bar{\psi}^T \rangle  \, . \label{potentials}
\end{eqnarray}
Our aim is to characterize the various types of reformulations of eq.~\eqref{Lag} that involve field redefinitions of the type $\psi = U \chi \equiv e^\Omega \chi$ with $\Omega \in \text{End}(\mathcal H_{\text{spinor}\otimes\text{int}})$; $\mathcal H_{\text{spinor}}\equiv$ spinor (Lorentz) rep space, e.g. $\mathbb C^{2}$ in (1+1)d, $\mathcal H_{\text{int}}\equiv$ internal space, e.g. $\mathbb C^{N_c} \otimes \mathbb C^{N_f}$, with $\psi(x)\in\mathcal H_{\text{spinor}}\otimes\mathcal H_{\text{int}}$. We identify $\chi$ as a dressed spinor that relates back to the fundamental fermions through 
\begin{eqnarray}
\chi =  (e^\Omega)^{-1}\,\psi = e^{-\Omega }\,\psi \, , 
\end{eqnarray}
in the case of a global $\Omega$, and 
\begin{eqnarray}
\chi(x) =  \mathcal P\exp\!\Big(-\!\int_{\gamma:x_0\to x}\! d\xi^\mu\,\mathcal A_\mu(\xi)\Big)\,\psi(x_0)\, ,
\end{eqnarray}
 otherwise, where $\mathcal{A}_\mu$ is the local generator determined by $\Omega$, and $\mathcal{P}$ is the path-ordering operator. The appeal of such transformations lies in their capacity to expose non-perturbative features from the standpoint of the original (undressed) degrees of freedom.

It is always algebraically possible to rewrite the fermion as a dressed field, \(\psi = e^\Omega \chi\), provided \(e^\Omega \) is invertible. However, this change of variables is physically useful only when \(\Omega \) is constructed from structures already present in the problem; here, from the potentials in eq.~\eqref{potentials}. In this case the Dirac operator takes the covariant form $\bar\chi\,i\gamma^\mu(\partial_\mu - iA_\mu^{\rm dress})\chi$ (with possibly a conformal factor) with $A_\mu^{\rm dress} =  i   (\partial_\mu U)U^{-1}$, which defines either a flat (trivializable) sector, with associated curvature $F_{+ -}^{\,\rm dress} = 0$, or produces an 
interaction tower for $F_{+ - }^{\, \rm dress} \neq 0$. By contrast, in the strictly free theory (no four-fermion couplings) there is no preferred choice of $U(x)$; any nontrivial $U$ merely manufactures a pure-gauge $A_\mu^{\rm dress}$ that can be removed again, in which case we may take $U=\mathbb{I}$ and $\chi=\psi$ without loss of generality.

In our work we must also draw an important distinction between \emph{external} versus \emph{self-consistent} enfolding of potentials into the spinor solutions. Throughout, ``externally-driven'' (non-autonomous) means that the amplitudes $\Delta$ and $\sigma$ are prescribed backgrounds for spinor solutions which must be determined independently, i.e., static or dynamic backgrounds whose profiles must be determined through separate equations of motion that ignore backreaction from the spinor field. In contrast, ``self-consistent'' (backreacted) means that $\Delta$ and $\sigma$ are determined together with the spinor solution, via stationarity (gap) conditions or dynamical equations. Concretely, if one introduces auxiliary bosons for the fermion bilinears via a Hubbard--Stratonovich step and writes the action schematically as
\[
S[\psi,\bar\psi, \Delta ,\sigma]
=\int d^2x\;\Big\{\bar\psi\,\mathcal{D}[ \Delta ,\sigma]\,\psi
- V_{\text{B}}( \Delta ,\sigma)\Big\},
\]
where $\mathcal{D}$ is the Dirac/BdG operator in the bosonic background and $V_{\text{B}}$ contains the quadratic and kinetic terms for $\Delta$ and $\sigma$, one finds the one-particle irreducible (1PI) effective action after integrating out the fermions 
\begin{equation}
\Gamma[ \Delta ,\sigma] = S_\mathrm{B}   [ \Delta ,\sigma] -    i      \ln\det  \! \left(  D[ \Delta  ,\sigma] \right) \,,
\end{equation}
where $S_\mathrm{B}$ encapsulates the local potentials/kinetic terms of $\Delta$ and $\sigma$. A configuration is \emph{self-consistent} if it obeys
\begin{equation}
\label{eq:self-consistency-stationarity}
\frac{\delta \Gamma}{\delta |\Delta| }=0,\qquad
\frac{\delta \Gamma}{\delta \beta }=0,\qquad
\frac{\delta \Gamma}{\delta \sigma }=0,  
\end{equation}
where $\Delta = |\Delta| e^{i \beta}$, together with the spinor equations in the same background. Equivalently, one could state that 
\begin{equation}
\label{eq:gap-meanfield}
\frac{|\Delta| \,e^{i\beta}}{g_d}= \langle \psi^T C\,\psi \rangle_{\Delta, \sigma} \, ,
\qquad
\frac{\sigma  }{g_s}=  \langle \bar\psi\psi  \rangle_{\Delta, \sigma} \, , 
\end{equation}
where the expectation values are computed in the background itself. Solving eq.~\eqref{eq:self-consistency-stationarity} (or eq.~\eqref{eq:gap-meanfield})
together with the reduced spinor equations yields the complete (classical) \emph{backreacted} solution. In a fully quantum treatment, self-consistency corresponds to integrating over the enlarged set of variables $(\psi,\bar\psi, \Delta, \Delta^* ,\sigma)$ in the path integral with the mean field treatment equivalent to the saddle point eq.~\eqref{eq:self-consistency-stationarity}. If the bosons are truly dynamical, e.g.\ $(\Box+m_\Delta^2) \Delta =\mathcal{J}_\Delta[\psi]$, $(\Box+m_\sigma^2)\sigma=\mathcal{J}_\sigma[\psi]$, with elementary spinors providing ``external'' source terms, then eq.~\eqref{eq:self-consistency-stationarity} will include such Euler--Lagrange equations with sources built from the same spinor solution; again, a backreacted (closed) system.

\subsection{Field redefinition for scalar potentials}

There is another sense in which we use \emph{self-consistency} in the present work. This is in regard to the local field redefinitions (discussed above) that enfold the difermion and scalar fields into the spinor itself. We will illustrate this procedure here using a simple scalar potential that minimally modifies the bare fermion mass $m$ without any additional spin couplings. We start from a generic $(1+1)$-dimensional Dirac theory 
\begin{equation}
\label{eq:origL}
\mathcal L_{\psi} \;=\; \bar\psi\,i\slashed{\partial}\psi \;-\; M_0(x)\,\bar\psi\psi \;+\; \mathcal L_{\rm int}[\psi]\,,
\end{equation}
where \(M_0(x)\) may include a bare mass and prescribed scalar potential and \(\mathcal L_{\rm int}\) contains any additional interactions (e.g. four-fermion terms). We introduce a real scalar field \(\sigma(x) := 2 \Omega(x)\) by the local Weyl-dressed decomposition
\begin{equation}
\label{eq:weylmap}
\psi \;=\; e^{-\sigma/2}\chi,\qquad \bar\psi \;=\; \bar\chi\,e^{-\sigma/2}\,.
\end{equation}
A direct substitution yields
\begin{equation}
\label{eq:Lprime}
\mathcal L'[\chi,\sigma] \;=\;
e^{-\sigma}\,\bar\chi\!\left(i\slashed{\partial}   -   \frac{i}{2}\slashed{\partial}\sigma\right)\!\chi
\;-\; e^{-\sigma}\,M_0(x)\,\bar\chi\chi \;+\; \mathcal L'_{\rm int}[\chi,\sigma] \,,
\end{equation}
with every operator in \(\mathcal L'_{\rm int}\) dressed by its Weyl weight; e.g.
\(g(\bar\psi\psi)^2 \mapsto g\,e^{-2\sigma}(\bar\chi\chi)^2\).
Equation~\eqref{eq:Lprime} is equivalently interpreted as the original theory
formulated on the conformally rescaled metric \(g_{\mu\nu}=e^{-2\sigma}\eta_{\mu\nu}\),
with the gradient term \(-\tfrac{i}{2}e^{-\sigma}(\partial_\mu\sigma)\bar\chi\gamma^\mu\chi\) 
playing the role of the 2D spin-connection or Weyl coupling. The status of \(\sigma\) admits several consistent specifications, none of which is needed in what follows. It may be taken as a Weyl compensator carrying a gauge redundancy \((\chi, \sigma) \sim (e^{-\alpha/2}\chi,\, \sigma - \alpha)\), in which case no new degree of freedom is introduced and variation with respect to \(\sigma\) yields a Ward identity rather than an independent equation of motion. It may be treated as a fixed spurion background, convenient for deriving local Weyl Ward identities and for studying the response to a prescribed conformal factor. It may be promoted to a dynamical dilaton with its own kinetic and potential terms, generated either by the loop-induced Polyakov term or by a Hubbard--Stratonovich decoupling of the scalar four-fermion channel. Or \(\sigma\) may simply weight the fermion operators by their Weyl dimensions as an external background not identified with any piece of \(M_0(x)\).

Varying \(\bar\chi\) in eq.~\eqref{eq:Lprime} gives the equation
\begin{equation}
\label{eq:chieom}
\Big(i\slashed{\partial}   -  \frac{i}{2}\slashed{\partial}\sigma - M_0(x)   \Big)\chi \;=\; 0 \, ,
\end{equation}
with the overall \(e^{-\sigma}\) factor dropping out of the equation of motion. In the special case where $M_0$ is purely imaginary, and $\slashed{\partial} \sigma = 2 i M_0$, one recovers the free (undressed) spinor, at least at the classical level: 
\begin{equation}
i \slashed{\partial} \chi = 0 \, . 
\end{equation}
The two--point functions relate multiplicatively:
\begin{equation}
\label{eq:greens}
G_\psi(x,y) \;\equiv\; \langle T\,\psi(x)\bar\psi(y)\rangle
\;=\; e^{-\sigma(x)/2}\,G_\chi(x,y)\,e^{-\sigma(y)/2}\,,
\end{equation}
so \(\chi \) is \emph{Weyl-dressed} by the local scale while \(\psi \) propagates in the canonical frame with
the derivative current coupling. In the compensator case the S--matrix is frame--independent.

\subsection{Field redefinition for general matrix potentials}
\label{sec:local-triv}

Here, we analyze the conditions that allow a general matrix-valued potential to be enfolded into the spinor field. This requires more care than the scalar case, since the potential can introduce non-trivial curvature. Consider the Dirac Lagrangian with a general matrix-valued background \(M(x)\in \mathrm{GL}(2,\mathbb{C})\),
\begin{equation}
\label{eq:L-psi-M}
\mathcal L[\psi;M] \;=\; \bar\psi\,\big(i\!\not\!\partial - M(x)\big)\,\psi \, , 
\end{equation}
and define the light-cone basis through \(x^\pm   =  t  \pm x   \), \(\gamma^\pm  =   \gamma^0\pm  \gamma^1\),
\(\gamma^5   =    \gamma^0   \gamma^1\), and chiral projectors \(P_{R, L}= \tfrac12(1\pm\gamma^5)\). We propose the following:

\begin{prop}[]
\label{prop:trivialization}
Define the matrix fields
\begin{equation}
\label{eq:Apm-def}
A_+(x) \,  := \,  \frac{1}{2\, i}\,P_R\,\gamma^-\,M(x),\qquad
A_-(x) \,   := \, \frac{1}{2\, i}\,P_L\,\gamma^+\,M(x) \, .
\end{equation}
Now assume that the \emph{zero-curvature}, or flatness, condition holds on a simply
connected domain \(\mathcal U\subset\mathbb{R}^{1,1}\), such that the associated field strength satisfies
\begin{equation}
\label{eq:flatness}
F_{+-} \;\equiv\; \partial_+ A_-(x) -  \partial_- A_+(x)  + [A_+(x),A_-(x)] \, = \,  0 \,  ,\quad \text{on  } \, \mathcal U.
\end{equation}
Then there exists \(U:  \mathcal U\to \mathrm{GL}(2,\mathbb{C})\) that satisfies
\begin{equation}
\label{eq:U-transport}
\partial_+ U \;=\; A_+ U,\qquad \partial_- U \;=\; A_- U,
\end{equation}
and consequently
\begin{equation}
\label{eq:Dirac-conj}
i\!\not\!\partial U \;=\; M\,U.
\end{equation}
With the local field redefinition \(\psi=U \chi\), \(\bar\psi=\bar\chi\,U^{-1}\),
the Lagrangian eq.~\eqref{eq:L-psi-M} becomes \(\mathcal L'=\bar\chi\,i\!\not\!\partial\,\chi\),
i.e. the Dirac operator is \emph{conjugated} to the free one.
\end{prop}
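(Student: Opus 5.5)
The plan has three steps: integrate the connection $(A_+,A_-)$ using flatness, check the Dirac identity \eqref{eq:Dirac-conj} from the light-cone form of $\not\!\partial$, and then substitute $\psi=U\chi$ into \eqref{eq:L-psi-M}.

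\emph{Step 1: existence of $U$.} Fix a base point $x_0\in\mathcal U$ and set
\begin{equation*}
U(x)=\mathcal P\exp\Big(\int_{\gamma:x_0\to x}\big(d\xi^+A_+ + d\xi^-A_-\big)\Big),
\end{equation*}
with later points ordered to the left, so that differentiating at the endpoint gives $\partial_\pm U=A_\pm U$. Two things must be checked.
\begin{itemize}
\item Path independence. The variation of the path-ordered exponential under a deformation of $\gamma$ with fixed endpoints is a conjugated integral of $F_{+-}$ over the swept region. This vanishes by \eqref{eq:flatness}. Since $\mathcal U$ is simply connected, any two paths are homotopic, so $U$ is well defined. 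This is the non-abelian Poincar\'e lemma quoted in the introduction; equivalently, it is Frobenius integrability of the system \eqref{eq:U-transport}, whose compatibility condition $\partial_-(A_+U)=\partial_+(A_-U)$ reduces exactly to $F_{+-}U=0$.
\item Invertibility. Along any path, $U$ solves a linear ODE with $U(x_0)=\mathbb I$, so Liouville's formula gives $\det U=\exp\int \mathrm{tr}(A_\pm)\,d\xi^\pm\neq0$. Hence $U(x)\in\mathrm{GL}(2,\mathbb C)$.
\end{itemize}

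\emph{Step 2: the identity \eqref{eq:Dirac-conj}.} Write $\not\!\partial$ in light-cone form, $\gamma^0\partial_t+\gamma^1\partial_x=c\,(\gamma^+\partial_++\gamma^-\partial_-)$. Here $c=1$ if $\partial_\pm=\partial_{x^\pm}=\tfrac12(\partial_t\pm\partial_x)$, and $c=\tfrac12$ if $\partial_\pm=\partial_t\pm\partial_x$. Inserting \eqref{eq:U-transport} gives
\begin{equation*}
i\not\!\partial U=\tfrac{c}{2}\big(\gamma^+P_R\gamma^-+\gamma^-P_L\gamma^+\big)MU .
\end{equation*}
In the real representation one has $\gamma^+=\sigma_1-i\sigma_2$ and $\gamma^-=\sigma_1+i\sigma_2$ (nilpotent, with $\gamma^+\gamma^-+\gamma^-\gamma^+=4\,\mathbb I$). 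A direct multiplication gives $\gamma^+P_R\gamma^-=4\,\mathrm{diag}(0,1)$ and $\gamma^-P_L\gamma^+=4\,\mathrm{diag}(1,0)$, so the bracket equals $4\,\mathbb I$. Thus $i\not\!\partial U=MU$ holds exactly when $c=\tfrac12$. I will therefore fix the normalization $\partial_\pm=\partial_t\pm\partial_x$ implicit in \eqref{eq:Apm-def}. With the other normalization, $A_\pm$ would have to be halved.

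\emph{Step 3: the Lagrangian.} Substituting $\psi=U\chi$ and using Step 2,
\begin{equation*}
(i\not\!\partial-M)(U\chi)=(i\not\!\partial U-MU)\chi+i\gamma^\mu U\partial_\mu\chi=i\gamma^\mu U\partial_\mu\chi .
\end{equation*}
With $\bar\psi=\bar\chi U^{-1}$ this yields $\mathcal L'=\bar\chi\,U^{-1}\gamma^\mu U\,i\partial_\mu\chi$.

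I expect this last step to be the main obstacle: the free form $\bar\chi\,i\not\!\partial\chi$ requires $U^{-1}\gamma^\mu U=\gamma^\mu$, which is not automatic for a general $\mathrm{GL}(2,\mathbb C)$ matrix $M$. I would handle it in two ways.
\begin{itemize}
\item Identify the regime where it holds exactly. This happens when $A_\pm$ commute with $\gamma^\mu$, for instance when $M$ acts only on $\mathcal H_{\rm int}$ (or is proportional to $\mathbb I$ in spinor space), so that $U$ commutes with $\gamma^\mu$.
\item Otherwise, interpret the claim as the operator statement $U^{-1}(i\not\!\partial-M)U=U^{-1}i\gamma^\mu U\,\partial_\mu$. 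This is a free Dirac operator in the moving frame $e^\mu=U^{-1}\gamma^\mu U$, whose Clifford relations are preserved by conjugation, so it is conjugate to the free one. That is the sense of ``conjugated'' in the statement.
\end{itemize}
In the write-up I would make this frame caveat explicit rather than assume commutation silently.
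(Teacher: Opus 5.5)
Your route is the paper's: integrate $(A_+,A_-)$ on the simply connected $\mathcal U$, check $i\not\!\partial U=MU$, then substitute. Your Step~2 is correct, and more careful than the paper's projector identities (in fact $P_R\gamma^-\gamma^+=4P_R$ and $P_L\gamma^+\gamma^-=4P_L$). Your Step~3 diagnosis is also right: the paper's own final chain silently commutes $U^{-1}$ past $\gamma^\mu$. Two steps of your proposal nevertheless fail.

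\emph{Step 1.} The compatibility condition does not reduce to $F_{+-}U=0$ with the sign printed in \eqref{eq:flatness}. A direct computation gives
\[
\partial_-(A_+U)-\partial_+(A_-U)=-\bigl(\partial_+A_--\partial_-A_+-[A_+,A_-]\bigr)\,U .
\]
So left transport, which Step~2 needs, requires the opposite commutator sign; the printed sign belongs to $\partial_\pm U=UA_\pm$. The two conditions agree only when $[A_+,A_-]=0$, and the hypothesis as printed is genuinely insufficient. Take (real representation) $M=\bigl(\begin{smallmatrix} f&1\\ 1&0\end{smallmatrix}\bigr)$, so $\det M=-1$, with $i\partial_+f=f$, $\partial_-f=0$ (e.g.\ $f=e^{-ix^+/2}$). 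Then $A_+=-i\,\mathrm{diag}(1,0)$, $A_-=-i\bigl(\begin{smallmatrix}0&0\\ f&1\end{smallmatrix}\bigr)$, and \eqref{eq:flatness} holds. Yet \eqref{eq:U-transport} forces $f\,u_1=0$ for the first row $u_1$ of $U$, so no invertible $U$ exists. The paper makes the same identification; you should flip the commutator sign in the hypothesis rather than assert the match.

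\emph{Step 3.} Neither rescue works. Since $\gamma^0,\gamma^1$ generate $\mathrm{End}(\mathbb{C}^2)$, the condition $U^{-1}\gamma^\mu U=\gamma^\mu$ forces $U=\mathbb{I}\otimes u$. But $P_LA_+=0=P_RA_-$, so $\partial_+U=A_+U$ gives $P_L\otimes\partial_+u=0$, hence $\partial_+u=0$ and $A_+=0$. Likewise $A_-=0$, and then $A_++A_-=-i\gamma^0M=0$ gives $M=0$. So there is no nontrivial regime. In particular, a spinor-scalar $M=\mathbb{I}\otimes m$ still gives $A_+=\tfrac{1}{2i}P_R\gamma^-\otimes m$, which does not commute with $\gamma^\mu$.

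The moving-frame reading is not a proof either. The operator $U^{-1}i\gamma^\mu U\,\partial_\mu=U^{-1}\circ(i\not\!\partial-M)\circ U$ is conjugate to the operator you started from. Preserving the Clifford relations pointwise does not make it conjugate to $i\not\!\partial$: a multiplication operator $V$ with $V^{-1}\circ i\not\!\partial\circ V=U^{-1}i\gamma^\mu U\,\partial_\mu$ must equal $(\mathbb{I}\otimes w)U$, which forces $M=-i\gamma^\mu\otimes w^{-1}\partial_\mu w$.

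Hence, with $\bar\psi=\bar\chi U^{-1}$, the last assertion of the proposition fails for every $M\neq0$. What your Steps~1--2 actually establish is $(i\not\!\partial-M)(U\chi)=i\gamma^\mu U\,\partial_\mu\chi$: the zeroth-order term is removed at the price of the frame $U^{-1}\gamma^\mu U$. That is what the write-up should state.

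A genuinely free form needs the true Dirac adjoint $\bar\psi=\bar\chi\,\gamma^0U^\dagger\gamma^0$ together with $\gamma^0U^\dagger\gamma^0\gamma^\mu U=\gamma^\mu$. An example is $U=e^{i\alpha+i\theta\gamma^5}$ with real $\alpha,\theta$. This is what \eqref{eq:U-transport} produces for real off-diagonal $M$ with $M_{21}=M_{21}(x^+)$ and $M_{12}=M_{12}(x^-)$.
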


\begin{proof}
Multiply \(i\gamma^\mu\partial_\mu U = M U\) successively on the left by
\(P_R\gamma^-\) and \(P_L\gamma^+\).
Using \(\gamma^-\gamma^-=0\), \(\gamma^+\gamma^+=0\), and
\(P_R\,\gamma^- \gamma^+ = 2\,P_R\), \(P_L\,\gamma^+\gamma^- = -\,2\,P_L\)
(with our metric convention), we obtain the system eq.~\eqref{eq:U-transport}
with the definitions in eq.~\eqref{eq:Apm-def}.
The Frobenius integrability condition for eq.~\eqref{eq:U-transport} is precisely
eq.~\eqref{eq:flatness}. Moreover, since \(\mathcal U\) is simply connected, there exists
a smooth solution \(U\) unique up to a constant factor.
Conversely, any \(U\) solving eq.~\eqref{eq:U-transport} satisfies
\(i\!\not\!\partial U = M U\) by construction.
Finally,
\[
\bar\psi\,(i\!\not\!\partial - M)\psi
= \bar\chi\,U^{-1}\big(i\!\not\!\partial - M\big)U\,\chi
= \bar\chi\big(i\!\not\!\partial + U^{-1}i\!\not\!\partial U - U^{-1}MU\big)\chi
= \bar\chi\,i\!\not\!\partial\,\chi,
\]
where the last equality uses \eqref{eq:Dirac-conj}.
\end{proof}

\begin{remark}[]  
\label{rem:POexp}
When eq.~\eqref{eq:flatness} holds, one may write \(U\) as a path-ordered exponential
\(
U(x)=\mathcal P\exp\big(\int_{x_0}^{x} A_+\,dx'^+ + A_-\,dx'^-\big)\,U_0
\)
that is \emph{path-independent}. If \(F_{+-}\neq 0\), the same formula defines a
\emph{path-dependent} \(U\), where a Magnus expansion now applies. The significant difference in this case is that the resulting field redefinition is nonlocal and the transformed theory is free plus interactions built from the curvature \(F_{+-}\).
\end{remark}

\begin{lemma}[] 
\label{lem:EOM-det}
Under the local similarity \(\psi=U\chi\), \(\bar\psi=\bar\chi U^{-1}\),
the equation of motion \((i\slashed \partial - M)\psi=0\) is equivalent to \(i \slashed \partial \chi = 0\). At the functional level, \(\det(i \slashed \partial - M)= \det(i \slashed \partial)\) when \eqref{eq:flatness} holds, up to local counterterms from the regularization of \(\Tr\ln U\) and \(\Tr\ln U^{-1}\). 
\end{lemma}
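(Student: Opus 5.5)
The plan has two parts: the classical equation of motion, and then the functional determinant.

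For the equation of motion I would use the identity established in the proof of Proposition~\ref{prop:trivialization}. For any invertible $U$ and any $\chi$, the Leibniz rule gives $i\slashed\partial(U\chi)=(i\slashed\partial U)\chi+i\gamma^\mu U\partial_\mu\chi$. Two issues need care here.

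\begin{itemize}
\item \emph{Ordering.} $U$ acts on $\mathcal H_{\rm spinor}\otimes\mathcal H_{\rm int}$ and need not commute with $\gamma^\mu$, so I would not move $U$ past $\gamma^\mu$ blindly. Instead I would organise the computation in light-cone components using the projectors.
\item \emph{Using the transport equations.} Inserting \eqref{eq:U-transport}, namely $\partial_\pm U=A_\pm U$ with $A_\pm$ from \eqref{eq:Apm-def}, together with \eqref{eq:Dirac-conj}, the $M$-dependent pieces cancel. This is exactly the step written in the proposition's proof, $U^{-1}(i\slashed\partial-M)U=i\slashed\partial$.
\end{itemize}

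This operator identity is what I will rely on: $(i\slashed\partial-M)U\chi=U\,i\slashed\partial\chi$. Since $U$ is pointwise invertible (it lies in $\mathrm{GL}(2,\mathbb C)$), $(i\slashed\partial-M)\psi=0$ holds if and only if $i\slashed\partial\chi=0$. This gives both directions of the equivalence, with $\chi=U^{-1}\psi$ supplying the converse.

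For the determinant I would write $i\slashed\partial-M=U\,(i\slashed\partial)\,U^{-1}$ as operators on sections over $\mathcal U$, where $U$ and $U^{-1}$ act as multiplication operators. Formally, multiplicativity then gives
\[
\det(i\slashed\partial-M)=\det U\,\det(i\slashed\partial)\,\det U^{-1},
\]
and $\ln\det U+\ln\det U^{-1}=\Tr\ln U+\Tr\ln U^{-1}=0$ would hold naively. The honest statement needs a regulator, for example heat kernel, zeta function, or a Fujikawa-type cutoff $e^{-D^\dagger D/\Lambda^2}$.

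I expect the regulator step to be the main obstacle. I would first compute the variation of $\ln\det$ under an infinitesimal $U=1+\epsilon\,\omega(x)$ and then integrate along a one-parameter family $U_t$, which exists on the simply connected $\mathcal U$ via the path-ordered construction of Remark~\ref{rem:POexp}. The Jacobian of the field redefinition $\psi=U\chi$, $\bar\psi=\bar\chi U^{-1}$ is $\exp(-\Tr\ln U-\Tr\ln U^{-1})$. It fails to cancel only through the regulated traces. The mismatch is a Fujikawa-type local functional of $\partial_\mu\omega$ and of $A_\pm$: schematically it is a $\Lambda^2$ term times $\int\tr\,\omega$, plus finite terms such as $\int\tr(\gamma^5\omega)\,\epsilon^{\mu\nu}F_{\mu\nu}$ and $\int\tr(\partial\omega\,A)$. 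When \eqref{eq:flatness} holds, the curvature pieces vanish. What remains integrates to a local polynomial in $A_\pm$ and the logarithm of $U$, of the Polyakov--Wiegmann/Wess--Zumino form. I will not compute it exactly, and will only argue that it is local and can be absorbed into local counterterms, which is the claimed equality up to regularization-dependent local terms.
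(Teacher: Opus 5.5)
Your argument rests on the intertwining identity $(i\slashed\partial-M)U\chi=U\,i\slashed\partial\chi$, and that identity is false in general. You correctly flag that $U$ need not commute with $\gamma^\mu$, but you then adopt exactly the step that ignores this. The paper gives no separate proof of this lemma, and the last display in the proof of Proposition~\ref{prop:trivialization} makes the same silent replacement $U^{-1}\gamma^\mu U\to\gamma^\mu$.

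The Leibniz rule together with $i\gamma^\mu(\partial_\mu U)=MU$ does cancel the $M$-terms. What is left, however, is
\[
(i\slashed\partial-M)(U\chi)=i\gamma^\mu U\,\partial_\mu\chi ,
\]
not $U\,i\gamma^\mu\partial_\mu\chi$. The transport equations \eqref{eq:U-transport} constrain $\partial_\pm U$, not $[\gamma^\pm,U]$, so light-cone bookkeeping cannot supply the missing commutation. Multiplying on the left by $U^{-1}\gamma^0$, the massive equation becomes $\partial_t\chi+U^{-1}\gamma^5U\,\partial_x\chi=0$. This coincides with the free equation only when $[U,\gamma^5]=0$.

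Here is a counterexample in the boost sector. Take $U=\exp(\tfrac12x^+\gamma^-)=\mathbb{I}+\tfrac12x^+\gamma^-$. It satisfies $i\slashed\partial U=MU$ with the constant $M=i(\mathbb{I}-\gamma^5)$, for which \eqref{eq:Apm-def} gives $A_+=\gamma^-$ and $A_-=0$, so \eqref{eq:flatness} holds. The spinor $\chi=\chi(x^+)$ with $P_L\chi=\chi$ solves $i\slashed\partial\chi=0$. Yet $(i\slashed\partial-M)U\chi=2i\,x^+\,d\chi/dx^+\neq0$ for non-constant $\chi$. So the first claim of the lemma, and your proof of it, fail for generic flat data.

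The determinant step inherits the same defect. The identity $i\slashed\partial-M=U\,i\slashed\partial\,U^{-1}$ does not hold. The correct identity $(i\slashed\partial-M)\circ U=i\gamma^\mu U\partial_\mu$ is not a conjugation of $i\slashed\partial$ by multiplication operators, so the one-parameter Fujikawa integration has no conjugation to act on.

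What can be proved is the restricted statement for $[U,\gamma^5]=0$, i.e.\ $U=e^{\alpha+\beta\gamma^5}$. This covers the first case of Corollary~\ref{cor:solvable-classes} and the difermion-phase dressing. Then $\gamma^\mu U=\tilde U\gamma^\mu$ with $\tilde U=\gamma^0U\gamma^0=e^{\alpha-\beta\gamma^5}$, so $(i\slashed\partial-M)U=\tilde U\,i\slashed\partial$. This gives three things:
\begin{itemize}
\item the equivalence of the equations of motion;
\item a free Lagrangian, provided one sets $\bar\psi=\bar\chi\,\tilde U^{-1}$ rather than $\bar\chi U^{-1}$;
\item the factorization $\det(i\slashed\partial-M)=\det\tilde U\,\det(i\slashed\partial)\,\det U^{-1}$.
\end{itemize}
The naive cancellation is now $\mathrm{Tr}\ln\tilde U-\mathrm{Tr}\ln U=-2\,\mathrm{Tr}(\beta\gamma^5)$, which vanishes only because $\mathrm{tr}\,\gamma^5=0$. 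Its regularized value is a Fujikawa-type Jacobian, local in $\alpha$ and $\beta$.

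Your regulator strategy is the right tool there, but it has to be applied to the pair $(\tilde U,U^{-1})$, not $(U,U^{-1})$. The lemma itself must be restricted to dressings commuting with $\gamma^5$.
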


\begin{cor}[Two solvable cases]
\label{cor:solvable-classes}\quad
\begin{itemize}

\item \emph{Pure vector/axial gradients.}  If \(M(x)= \tfrac{i}{2}\,\slashed{\partial}\sigma(x) + \tfrac{i}{2}\,\gamma^5 \slashed{\partial}\theta(x)\),
then \eqref{eq:flatness} holds identically and \(U(x)=\exp\big(-\tfrac12 \sigma(x) - \tfrac12 \theta(x)\,\gamma^5\big)\) solves \(i\!\not\!\partial U = M U\). This reproduces the familiar Weyl/chiral rescalings.

\item \emph{Chirally factorized backgrounds.} If \(M(x)=\gamma^- \mathcal M_+(x^+) + \gamma^+ \mathcal M_-(x^-)\) with \(\mathcal M_\pm\)
arbitrary \(2\times2\) matrices depending only on one light-cone coordinate,
then \(F_{+-}=0\) (since \(\partial_-A_+=\partial_+A_-=0\) and \([A_+,A_-]=0\)) and \(U(x)=\exp\big(\int^{x^+}\!A_+(s)\,ds\big)\,\exp\big(\int^{x^-}\!A_-(s)\,ds\big)\) trivializes the operator.

\end{itemize}
\end{cor}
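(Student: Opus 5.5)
The plan is to reduce both items to Proposition~\ref{prop:trivialization}, run in the converse direction. In each case I exhibit an explicit $U$ and check the transport system \eqref{eq:U-transport}, equivalently $i\!\not\!\partial U = MU$. Flatness then comes for free: since $A_\pm = (\partial_\pm U)U^{-1}$ is pure gauge, $F_{+-}$ in \eqref{eq:flatness} vanishes identically. The final claim, that $\mathcal L'=\bar\chi\, i\!\not\!\partial\,\chi$, is then exactly the last display in the proof of Proposition~\ref{prop:trivialization}. Alongside this I will also carry out the direct computation of $A_\pm$ from \eqref{eq:Apm-def}, as the statement asserts.

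\emph{First item.} Set $U=\exp(-\tfrac12\sigma-\tfrac12\theta\gamma^5)$. Since $\sigma$ is scalar and only $\gamma^5$ appears in the exponent, all terms in the exponent commute and $\partial_\mu U = -\tfrac12(\partial_\mu\sigma + \partial_\mu\theta\,\gamma^5)U$. This gives $i\gamma^\mu\partial_\mu U = -\tfrac{i}{2}(\slashed\partial\sigma + \gamma^\mu\gamma^5\partial_\mu\theta)U$. I then use $\gamma^\mu\gamma^5=-\gamma^5\gamma^\mu$ to rewrite the axial term as $\tfrac{i}{2}\gamma^5\slashed\partial\theta\,U$, matching $MU$. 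I expect the scalar term to require fixing the sign convention for $\sigma$ (equivalently, the orientation $\psi=U\chi$ against the Weyl map \eqref{eq:weylmap}); I will carry it through consistently with the paper's conventions. As a check, projecting with $P_R\gamma^-$ and $P_L\gamma^+$ makes $A_\pm$ combinations of $\partial_\pm\sigma$ and $\partial_\pm\theta$ times $\mathbb{I}$ and $\gamma^5$. These matrices commute, so $[A_+,A_-]=0$. The derivative terms cancel by equality of mixed partials, so \eqref{eq:flatness} holds identically.

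\emph{Second item.} Here I compute $A_\pm$ from \eqref{eq:Apm-def} using $\gamma^\pm\gamma^\pm=0$ and the identities $P_R\gamma^-\gamma^+=2P_R$ and $P_L\gamma^+\gamma^-=-2P_L$ quoted in the proof of Proposition~\ref{prop:trivialization}. Each $A_\pm$ then reduces to a chiral projector times a single one of the $\mathcal M$'s. Adopting the light-cone labelling under which the surviving coefficient of $A_+$ depends only on $x^+$ and that of $A_-$ only on $x^-$, we get $\partial_-A_+=\partial_+A_-=0$, so the derivative terms in $F_{+-}$ drop out. For the factorized $U$, $\partial_+U=A_+U$ is immediate. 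For $\partial_-U$ one must move $A_-(x^-)$ through $\exp(\int^{x^+}A_+)$, which requires $[A_+(s),A_-(t)]=0$ for all $s,t$. This is stronger than the pointwise commutator in $F_{+-}$, and I will establish this two-argument version.

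\emph{Main obstacle.} The vanishing of $[A_+,A_-]$, with $A_+\propto P_R X$ and $A_-\propto P_L Y$, is the crux. My first attempt is to use the chiral structure: $\gamma^\pm$ are off-diagonal in the $\gamma^5=\sigma_3$ basis and exchange the $P_R$ and $P_L$ blocks. The aim is to show that $P_R X P_L Y$ and $P_L Y P_R X$ each land in orthogonal blocks that are annihilated by the projectors carried along from \eqref{eq:Apm-def}, so that both products vanish. If that block argument leaves residual terms, I will isolate them explicitly. Having such residual terms vanish is what makes the commuting-exponential factorization valid. Once it is in hand, Proposition~\ref{prop:trivialization} completes the argument.
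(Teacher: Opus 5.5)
Your approach is essentially the paper's: exhibit $U$, read off flatness, and invoke Proposition~\ref{prop:trivialization}. You correctly catch the sign of the $\sigma$ term, the light-cone labelling, and the need for $[A_+(s),A_-(t)]=0$ at distinct arguments. But the step you call the crux cannot be carried out. $[A_+,A_-]=0$ is false for arbitrary $2\times2$ matrices $\mathcal M_\pm$; the statement's parenthetical simply asserts it. Take the constants $\mathcal M_+=\mathcal M_-=\tfrac m2\gamma^0$. Then $M=\tfrac m2(\gamma^++\gamma^-)\gamma^0=m\,\openone$, which has the required form in either labelling. From \eqref{eq:Apm-def}, $A_\pm=\tfrac{m}{2i}\gamma^\mp$, so $F_{+-}=[A_+,A_-]=-\tfrac{m^2}{4}[\gamma^-,\gamma^+]=-m^2\gamma^5\neq0$. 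This is precisely the $\mp\sigma^2$ on the diagonal of the paper's own Fierz-complete $F_{+-}$. Your block argument fails because the projectors in \eqref{eq:Apm-def} act only on the left: $A_+A_-\propto P_R\mathcal M\,P_L\mathcal M'$, and a generic $\mathcal M$ has a nonzero chirality-flipping block $P_R\mathcal M P_L$. An extra hypothesis is genuinely needed, e.g.\ $[\mathcal M_\pm,\gamma^5]=0$, together with your relabelled form $M=\gamma^+\mathcal M_+(x^+)+\gamma^-\mathcal M_-(x^-)$. (With the original labelling $A_+=-2iP_R\mathcal M_-(x^-)$, so $\partial_-A_+\neq0$.) Under that hypothesis $A_+=-2iP_R\mathcal M_+(x^+)$ and $A_-=-2iP_L\mathcal M_-(x^-)$. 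Both $A_+(s)A_-(t)$ and $A_-(t)A_+(s)$ then vanish identically, and the factorized $U$ satisfies \eqref{eq:U-transport}.

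Item 1 has an independent gap: \eqref{eq:U-transport} is not equivalent to $i\slashed{\partial}U=MU$. Since $\slashed{\partial}=\gamma^+\partial_++\gamma^-\partial_-$ with $\gamma^+=\gamma^+P_R$ and $\gamma^-=\gamma^-P_L$, the Dirac equation constrains only $P_R\partial_+U$ and $P_L\partial_-U$. By contrast, \eqref{eq:U-transport} additionally forces $P_L\partial_+U=0=P_R\partial_-U$; moreover, since $\gamma^-\gamma^+=4P_R$, it returns $i\slashed{\partial}U=2MU$. Your $U$, with the $\sigma$-sign repaired (as written, $i\slashed{\partial}U=(-\tfrac i2\slashed{\partial}\sigma+\tfrac i2\gamma^5\slashed{\partial}\theta)U$), has $(\partial_+U)U^{-1}\in\mathrm{span}\{\openone,\gamma^5\}$ with a nonzero $P_L$ block. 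So it does not solve \eqref{eq:U-transport}, its pure-gauge connection is not the $A_\pm$ of \eqref{eq:Apm-def}, and flatness does not come for free. Your direct check fails as well. From \eqref{eq:Apm-def}, $A_+=P_R\,\partial_+(\sigma-\theta)$ and $A_-=P_L\,\partial_-(\sigma+\theta)$; these do commute, but the derivative terms carry different projectors and combinations. The result is $F_{+-}=\partial_+\partial_-\theta\,\openone-\partial_+\partial_-\sigma\,\gamma^5$, which vanishes only when $\partial_+\partial_-\sigma=\partial_+\partial_-\theta=0$. So item 1 survives only as $i\slashed{\partial}U=MU$ after the sign fix, which shows that solving the Dirac equation for $U$ does not require \eqref{eq:flatness}. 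Finally, the last display in the proof of Proposition~\ref{prop:trivialization}, which you plan to invoke, tacitly uses $U^{-1}\gamma^\mu U=\gamma^\mu$. For a chirally diagonal $U=\alpha P_R+\beta P_L$ one has $U^{-1}\gamma^\pm U=(\alpha/\beta)^{\pm1}\gamma^\pm$. So even granting $i\slashed{\partial}U=MU$, you can conclude only the equation-of-motion statement $i\slashed{\partial}\chi=0$ (because $\gamma^+U=\alpha\gamma^+$ and $\gamma^-U=\beta\gamma^-$), not $\mathcal L'=\bar\chi\,i\slashed{\partial}\chi$ with $\bar\psi=\bar\chi U^{-1}$.
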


\begin{remark}[]
\label{rem:curvature-EFT}
When \(F_{+-}\) is small, one may use the Magnus expansion
\(
U=\exp\{\Omega_1+\Omega_2+\cdots\}
\)
with
\(
\Omega_1=\int A_\mu dx^\mu,\quad
\Omega_2=\tfrac12\iint \Theta(s_1-s_2)\,[A(s_1),A(s_2)]\,ds_1ds_2,\ \ldots
\).
Conjugation by \(U\) maps the original theory to a free Dirac action
plus a (generally local, in gradient expansion) tower of vertices built from \(F_{+-}\)
and its derivatives. This provides a controlled effective field theory when the background is nearly trivializable.
\end{remark}

\section{Field redefinition for Fierz complete potentials}
\label{sec:local-fierz-complete}

We first summarize the scope of trivialization established in the previous section. The similarity transformation $\psi(x)=U(x)\,\chi(x)$ trivializes the Dirac operator
\(
(i\slashed{\partial}-M(x))\psi=0 \;\longrightarrow\; i\slashed{\partial}\,\chi=0
\)
\emph{iff} the matrix background $M(x)$ defines a \emph{flat} effective connection. In light-cone
notation (used for convenience), we defined the field components \(A_\pm(x)\) where the zero-curvature condition
\begin{align}
F_{+-} \equiv \partial_+A_- - \partial_-A_+ + [A_+,A_-] = 0
\end{align}
guarantees a local solution of $\partial_\pm U = A_\pm U$ so that $U^{-1}(i\slashed{\partial}-M)U=i\slashed{\partial}$.
This is the non-abelian Poincar\'e lemma: \emph{on a simply connected patch, flat connections are
locally pure gauge}~\cite{DemessieSaemann2014,HaydysIntroGauge}. The statement is coordinate-free; light-cone variables merely organize chiral projectors. The same light-cone system is formally the zero-curvature representation of $(1+1)$d integrable systems, in which the Dirac operator appears as the Zakharov-Shabat spectral problem and dressing transformations generate soliton solutions~\cite{ZakharovShabat1972,ZakharovShabat1974,AKNS1974}. The present use differs in aim and structure: no spectral parameter is introduced, and the connection is built from the condensate potentials with the goal of trivializing the operator rather than generating solutions. At the quantum level, local chiral/Weyl redefinitions of the fermions modify the path-integral measure and generate local counterterm anomalies via the Fujikawa Jacobian~\cite{AroucaCappelliHansson2024}.

A pure scalar coupling $M(x)=S(x)\,\mathbb{I}$ is not a gauge potential and cannot be removed by a similarity that only shifts a connection. Inserting $M=S\,\mathbb{I}$ into the definitions of $A_\pm$ above yields, in general, $F_{+-}\neq 0$ unless $S$ is constant; no $U(x)$ exists that maps $(i\slashed{\partial}-S)\psi=0$ to a free Dirac equation without introducing new derivative couplings. Physically, an inhomogeneous scalar (spatially varying mass) produces genuine effects (such as domain-wall zero modes, etc.), as exemplified by the Jackiw--Rebbi problem~\cite{JackiwRebbi1976,Charmchi2014}.
By contrast, backgrounds that are \emph{pure gradients} of vector/axial fields are classically pure gauge and can be removed locally, with the caveat that quantum anomalies reappear through the Fujikawa Jacobian~\cite{YaoFukusumi2019,AroucaCappelliHansson2024}. With $M=S\,\mathbb{I}$ and light-cone projectors $P_{R/L}$,
\(
A_+\propto P_R\,\gamma^-\,S,\quad A_-\propto P_L\,\gamma^+\,S.
\)
Because $S$ is a scalar function, $[A_+,A_-]$ does not cancel the cross-derivatives
$\partial_\pm S$ in general, so $F_{+-}\ne 0$ except when $S=\mathrm{constant}$.

To obtain the new dressed degrees of freedom in a natural way, we start by writing down the Dirac–HFB equations for the Nambu spinor $\Psi_N=(\psi,\psi^c)^T$ 
\begin{eqnarray}
\label{HFB}
\begin{pmatrix}
i\slashed{\partial} + \mu \gamma^0 - m - \sigma(x) & -\,\Delta(x) \\
\Delta^*(x) & i\slashed{\partial} - \mu \gamma^0 - m - \sigma(x)
\end{pmatrix}
\begin{pmatrix}\psi\\ \psi^c\end{pmatrix}=0 \, , 
\end{eqnarray}
where we consider a finite $\mu$ so that pairing of antiparticles is suppressed: $\bar{\Delta} \simeq 0$, $\Delta \ne 0$. To reduce this to a single equation, we incorporate charge conjugation into the difermion potential so that the latter is not a simple scalar potential but a $2 \times 2$ matrix potential that couples to spin. The condensed result is a Dirac-Hartree equation for a spinor coupled to scalar and difermion potentials:
\begin{eqnarray}
i \left[ \gamma^0\!  \left(\partial_t - i \mu \right) - \gamma^1\!  \left(\partial_x + i   \Delta(x) K \right) \right] \psi - \sigma(x) \,  \psi = 0 \, . \label{ModDirac}
\end{eqnarray}
The mass is now dynamically generated through $\sigma(x)$, and $K$ is the complex conjugation operator.

As an exercise, one might Fourier expand the solution in eq.~\eqref{ModDirac} in the chiral limit ($\sigma \to 0$) noting that the difermion potential naturally couples left and right Fourier modes. Taking the new fundamental fields to be noninteracting dressed modes that incorporate the background, neglecting backreaction for simplicity (passive background), allows us to express the Fourier amplitude for the spinor solution in eq.~\eqref{ModDirac} through the composite/dressed field $\chi = ( \chi_+ \,, \; \chi_- )^T$, where $\chi_\pm  \equiv     e^{  \mp \zeta  /2  }   \,  e^{  \mp i \beta /2  } \,  \phi^{\mp 1/2}  \,   \psi_\pm \,$ and $\psi = ( \psi_+ \,, \; \psi_-)^T$. Thus, one may express the dressed spinors in terms of the modified complex boost factors 
\begin{eqnarray}
 e^{\eta' /2} &=&  \sqrt{\frac{E + \mu + (p +  \Delta )}{E + \mu - (p +  \Delta )}}  \nonumber  \\
  &=&    \sqrt{\frac{E_+ }{E_-}}  \,     \sqrt{\frac{1   +   \mathrm{Re}( \Delta)/E_+ }{1 -    \mathrm{Re}( \Delta) / E_-}}  \,  \sqrt[4]{\frac{1   +     \mathrm{Im}( \Delta)^2/(E_+ +  \mathrm{Re}(  \Delta) )^2 }{1 +  \mathrm{Im} ( \Delta)^2 / (E_- -  \mathrm{ Re }( \Delta ))^2   }}  \, \nonumber \\
                   && \hspace{4pc} \times  e^{ i \left\{  \mathrm{tan}^{-1}\left[  \mathrm{Im} ( \Delta) /(E_+ +    \mathrm{ Re}(  \Delta )) \right]    -         \mathrm{tan}^{-1}\left[  \mathrm{Im}( \Delta) /(E_- -   \mathrm{Re} ( \Delta) ) \right]                              \right\} /2  }   \nonumber  \\
   &\equiv& e^{\eta/2} \, e^{\zeta/2}  \,  \phi^{1/2} \, e^{i \beta/2} \, . \label{dressed1} 
   \end{eqnarray}
 Factors in the last line are identified in order with those in the previous step, with kinematic variables $E_\pm = E + \mu \pm p \in \mathbb{R}$ and the difermion field $\Delta \in \mathbb{C}$. The first factor from the left (parametrized by $\eta$) in eq.~\eqref{dressed1} is the kinematic boost associated with the elementary spinor field. The second and third factors ($\zeta$ and $\phi$) encode a purely real amplitude generated by the background. The fourth factor ($\beta$) is a complex phase generated by the background. Thus, we see that this approach naturally incorporates the mean-field background at the classical level as a starting point for a complete quantum mechanical treatment (more background for this approach can be found in the appendices at the end of this paper and in our previous work~\cite{Haddad2024}).

 Turning now to the complete space-time problem with nontrivial coupling to the background, we write the equation of motion
\begin{equation}
\left( i  \, \slashed{\partial}  - M   \right) \psi = 0  \, , 
\end{equation}
where $M(x) =  - \gamma^0 \mu - \gamma^1 \Delta(x) K + \sigma(x)$. Following our analysis from the previous section, we have
\begin{eqnarray}
&&A_+  =   \frac{1}{2\, i}\,P_R\,\gamma^- M  =   i \left(  \begin{array}{ll}
\mu + \Delta K \;\;    &   - \sigma     \\
\;\; 0 \;\;    &  \;\; 0 \end{array}  \right) \, ,  \\
 &&A_-   = \frac{1}{2\, i}\,P_L\,\gamma^+\,M   =     i \left(  \begin{array}{ll}
\;\;  0  \;\;    &  \;\; \;\;\; \;\; \;\;   0     \\
 - \sigma  \;\;    &  \;\; \mu - \Delta K \end{array}  \right) \, ,
\end{eqnarray}
which give the field strength 
 \begin{eqnarray}
&&F_{+ -}  =    \left(  \begin{array}{ll}
 -i \partial_-( \Delta K + \mu) - \sigma^2  \;\;    &  \;\;  i \partial_-  \sigma + \sigma ( \mu - \Delta K)    \\
-i  \partial_+  \sigma - \sigma ( \mu + \Delta K) \;\;    &  \;\; \;\;  \;\;  i \partial_+( \Delta K - \mu)+ \sigma^2 \end{array}  \right) \, . 
\end{eqnarray}

Complete absorption (trivialization) of the potential into the dressed states requires \( F_{+-} = 0  \implies \)  
\begin{equation}
 \partial_- ( \Delta + \mu )=   i  \sigma^2 \; , \quad \partial_+ (\Delta - \mu )=  i  \sigma^2 \; , \quad    \partial_-  \sigma  = - i \sigma ( \mu - \Delta) \; , \quad    \partial_+  \sigma  =  i \sigma ( \mu + \Delta) \, . 
\end{equation}
 In the scalar region characterized by broken chiral symmetry (\( \sigma \ne 0  ;  \,        |\Delta|  , \, \mu \to 0  \)), with \( \Delta = |\Delta| e^{i \beta}\), the zero-curvature conditions lead to \( \beta \sim   b_+ {(x^+)}^n + b_- ({x^-)}^n \), where \(1/2 < n <1 \), \( |\Delta| \sim d_+ ({x^+})^{1-n} + d_- ({x^-)}^{1-n} \) and \( \sigma \sim \pm \sqrt{ d_\pm n /b_\pm } \), with the mass conditions \(\, m_\Delta \to \infty , \,   \partial_t^2 \sigma \sim - m_\sigma^2 \sigma   \). Lowest-order variations in the scalar field yield
 \begin{eqnarray}
 \sigma(x^+, x^-) \simeq \pm \sqrt{\frac{d_\pm n }{4 b_\pm}} \left[  1 - e^{- \frac{1}{2} (  \delta_{\mu  \nu}   \delta_{\beta}^\nu \delta_{\alpha}^\mu \,  d_\mu b_\nu \,  x^\alpha x^\beta )}  \right]\, , 
 \end{eqnarray}
 which are all consistent with expected properties of this regime. In contrast, for the difermion regime (\(\sigma \to 0 ;  \,  \Delta , \, \mu \ne 0 \)), the only way to satisfy the constraint equations is for $|\Delta|$ and $\beta$ constant. Departure from these conditions, i.e., $\Delta = \Delta(x^+, x^-)$, generates a nonzero curvature \( F_{+-} \sim \partial_\pm  \Delta \). In the crossover region, excitations of both fields are too massive to access at low energies, \(m_\sigma , \, m_\Delta \to \infty \), hence \( \sigma \simeq \Delta \simeq 0 \).

To go beyond the flat-connection limit when \(F_{+-}\neq 0\), we organize the dressing through the Magnus expansion. Writing the matrix potential as
\begin{equation}
M = \sigma\,\openone + \mathcal{M}_\Delta(\Delta),
\end{equation}
where \(\mathcal{M}_\Delta\) denotes the part sourced by the difermion channel, the realization of \(\mathcal{M}_\Delta\) depends on context: in a pseudoscalar mass representation, \(\mathcal{M}_\Delta = \Delta\,i\gamma^5\); in the Nambu--Gor'kov representation on the doubled spinor \(\Psi_N = (\psi, \psi^c)^T\),
\begin{equation}
\mathcal{M}_\Delta = \begin{pmatrix} 0 & \Delta \\ \Delta^* & 0 \end{pmatrix}\otimes \gamma^1,
\end{equation}
which is linear on the doubled space and avoids the antilinear bookkeeping otherwise required by the charge-conjugation operator \(K\). We adopt the doubled formulation throughout the remainder of this section, but suppress the Nambu indices when the structure is unambiguous.

Fixing a base point \(x_0\) and integrating along the rectangular path \(\gamma: x_0 \to (x^+, x_0^-) \to (x^+, x^-)\), we define
\begin{equation}
U(x) = \mathcal{P}_\gamma \exp\!\int_\gamma A_\mu\, dx^\mu = \exp\Omega,
\qquad
\Omega = \Omega_1 + \Omega_2 + \cdots,
\end{equation}
with
\begin{equation}
\Omega_1 = \int_{x_0^+}^{x^+}\! ds\, A_+(s,\, x_0^-) + \int_{x_0^-}^{x^-}\! dt\, A_-(x^+, t),
\end{equation}
\begin{equation}
\Omega_2 = -\tfrac{1}{2}\!\int_\gamma\!\!\int_\gamma\! ds_1\, ds_2\,\Theta(s_1 - s_2)\,[A(s_1), A(s_2)].
\end{equation}
For the rectangular contour, the bulk part of \(\Omega_2\) can be expressed as a surface integral of the curvature,
\begin{equation}
\Omega_2(x) = -\tfrac{1}{2}\!\int_{\Sigma(x_0 \to x)}\! d\sigma^{+-}\, F_{+-}(\xi) + (\text{edge commutators}),
\label{eq:Omega2-surface}
\end{equation}
so \(\Omega_2\) is first-order in \(F_{+-}\) and vanishes when the connection is flat.

The conjugated Dirac operator becomes
\begin{equation}
U^{-1}(i\slashed{\partial} - M)\,U = i\slashed{\partial} + \mathcal{V}_{\rm eff},
\qquad
\mathcal{V}_{\rm eff} \equiv i\,U^{-1}\slashed{\partial}\,U - U^{-1}\,M\,U.
\label{eq:Veff-def}
\end{equation}
Applying the Baker--Campbell--Hausdorff expansion to \(U = e^{\Omega_1 + \Omega_2 + \cdots}\) and collecting by order in \(F_{+-}\),
\begin{equation}
\mathcal{V}_{\rm eff} = \underbrace{\bigl(i\slashed{\partial}\Omega_1 - M\bigr)}_{\text{abelian / leading}}
\;+\; \underbrace{\bigl(i\slashed{\partial}\Omega_2 + \tfrac{i}{2}[\Omega_1,\,\slashed{\partial}\Omega_1]\bigr)}_{\mathcal{O}(F_{+-})}
\;+\; \mathcal{O}(F_{+-}^2,\, D_\pm F_{+-}).
\label{eq:Veff-orders}
\end{equation}
The Magnus expansion thus organizes the residual interaction as a local tower in \(F_{+-}\) and its covariant derivatives, providing a controlled effective field theory for nearly-flat backgrounds.

\subsection{Leading-order equations of motion and corrections}

At leading (abelian) order, the residual interaction vanishes when
\begin{equation}
i\slashed{\partial}\,\Omega_1 = M.
\label{eq:leading-Omega1}
\end{equation}
This is the analogue, at the Magnus level, of solving for the dressing in the chiral limit (eq.~\eqref{dressed1}); it determines \(\Omega_1\) as the abelian primitive of \(M\) and does not require flatness, but its consistency with the full non-abelian conjugation is controlled by the size of \(F_{+-}\) through the higher-order corrections in eq.~\eqref{eq:Veff-orders}.

The most general solution of eq.~\eqref{eq:leading-Omega1} requires \(\Omega_1\) to take values in the full algebra spanning the matrix space, which in our case is \(\mathfrak{gl}(2,\mathbb{C})\) (or its Nambu extension). A convenient basis is the Clifford basis \(\{\openone, \gamma^0, \gamma^1, \gamma^5\}\) with complex coefficients, which spans the \(2\times 2\) complex matrices over \(\mathbb{R}\):
\begin{equation}
\Omega_1 = (i a + e)\,\openone + (i b + f)\,\gamma^0 + (i c + g)\,\gamma^1 + (i d + h)\,\gamma^5,
\qquad a, b, c, d, e, f, g, h \in \mathbb{R}.
\label{eq:Omega1-basis}
\end{equation}
We have arranged the imaginary part of each coefficient to multiply the corresponding Clifford generator, anticipating that for real condensates \(M\) has real coefficients in the Clifford basis. Computing \(i\slashed{\partial}\Omega_1\) using the Clifford identities
\begin{equation}
(\gamma^0)^2 = \openone,\quad (\gamma^1)^2 = -\openone,\quad \gamma^0\gamma^1 = \gamma^5,\quad \gamma^0\gamma^5 = \gamma^1,\quad \gamma^1\gamma^5 = \gamma^0,
\end{equation}
and projecting onto each Clifford basis component, eq.~\eqref{eq:leading-Omega1} decouples into two independent sets of four real equations. With \(\partial_\pm \equiv \tfrac{1}{2}(\partial_0 \pm \partial_1)\) and \(\Delta = |\Delta|\,e^{i\beta} \equiv \Delta_R + i\Delta_I\), real-coefficient sector \((a, b, c, d)\), sourced by \((\sigma, \mu, \Delta_R)\):
\begin{equation}
\begin{aligned}
\partial_+(c - b) &= \tfrac{1}{2}\,\sigma, & \qquad
\partial_-(c + b) &= -\tfrac{1}{2}\,\sigma, \\
\partial_+(a + d) &= \tfrac{1}{2}(\mu + \Delta_R), &
\partial_-(a - d) &= \tfrac{1}{2}(\mu - \Delta_R).
\end{aligned}
\label{eq:EOM-real}
\end{equation}
The imaginary-coefficient sector \((e, f, g, h)\), sourced by \(\Delta_I\), gives:
\begin{equation}
\begin{aligned}
\partial_+(g - f) &= 0, & \qquad
\partial_-(g + f) &= 0, \\
\partial_+(e + h) &= \tfrac{1}{2}\,\Delta_I, &
\partial_-(e - h) &= -\tfrac{1}{2}\,\Delta_I.
\end{aligned}
\label{eq:EOM-imag}
\end{equation}
Two structural features are worth noting. First, the system decomposes into chiral combinations: each equation involves either \(\partial_+\) or \(\partial_-\) acting on one chirally-defined linear combination of the coefficients. This follows from the Clifford identity \(\gamma^0\gamma^5 = \gamma^1\), equivalently, the 2D relation \(\gamma^\mu\gamma^5 = \epsilon^{\mu\nu}\gamma_\nu\)). Second, the real-coefficient and imaginary-coefficient sectors decouple: the difermion amplitude \(|\Delta|\) enters through both \(\Delta_R\) and \(\Delta_I\), but the phase \(\beta = \arg\Delta\) sources only the \((e,f,g,h)\) sector. This separation is the algebraic precursor of the spin-charge factorization discussed in Sec.~\ref{sec:interpretations}.

Eq.~\eqref{eq:EOM-real} is over-determined as written since there are four equations for four functions, but two of them (the \(\sigma\)-sourced pair) place constraints on \(\sigma\) itself through the cross-derivative integrability conditions. Applying \(\partial_-\) to the first and \(\partial_+\) to the second of the \(\sigma\)-sourced equations yields
\begin{equation}
\partial_+\partial_-(c-b) = \tfrac{1}{2}\partial_-\sigma, \qquad
\partial_-\partial_+(c+b) = -\tfrac{1}{2}\partial_+\sigma,
\end{equation}
and adding gives \(\partial_+\partial_- c = \tfrac{1}{4}(\partial_-\sigma - \partial_+\sigma)\), \(\partial_+\partial_- b = -\tfrac{1}{4}(\partial_-\sigma + \partial_+\sigma)\), determining \(b\) and \(c\) up to harmonic functions of \(x^+\) and \(x^-\) separately. Analogous conditions hold in the \((a, d)\) and \((e, h)\) blocks. A background \((\sigma, \mu, \Delta)\) is \emph{trivializable at leading Magnus order} when these integrability conditions admit a global solution on the patch \(\mathcal{U}\).

The exact chiral decoupling above is a feature of the abelian limit. At \(\mathcal{O}(F_{+-})\), the second group of terms in eq.~\eqref{eq:Veff-orders} sources every Clifford component, including those that vanished at leading order. Projecting the homogeneous-source equations (the \(\partial_0 c - \partial_1 b = 0\) constraint in the real sector, and the analogous \(\partial_0 g - \partial_1 f = 0\) in the imaginary sector) onto the appropriate Clifford components yields
\begin{equation}
\partial_0 c - \partial_1 b = [F_{+-}]^{(5)} + \mathcal{O}(F_{+-}^2),
\label{eq:integrability-correction}
\end{equation}
where \([F_{+-}]^{(5)}\) denotes the \(\gamma^5\)-projection of the curvature matrix evaluated at \(x\). Adding this to the \(\sigma\)-sourced equation \(\partial_1 c - \partial_0 b = \sigma\) (the \(\openone\)-projection at leading order) gives the corrected chiral pair
\begin{equation}
\partial_+(c - b) = \tfrac{1}{2}\,\sigma + \tfrac{1}{2}[F_{+-}]^{(5)},
\qquad
\partial_-(c + b) = -\tfrac{1}{2}\,\sigma + \tfrac{1}{2}[F_{+-}]^{(5)}.
\end{equation}
The curvature thus enters as a \emph{common-mode shift} in both \(\partial_\pm\) equations, while the scalar condensate enters as a \emph{differential-mode shift}. Analogous corrections appear in the \((a,d)\) and \((e,h)\) blocks, with \([F_{+-}]^{(0)}\), \([F_{+-}]^{(1)}\) sourcing the \(\gamma^0\) and \(\gamma^1\) projections respectively.

Finally, we note that the \(F_{+-} \to 0\) limit recovers the chirally-factorized equations~\eqref{eq:EOM-real}--\eqref{eq:EOM-imag}. The Magnus expansion is therefore controlled by the dimensionless ratio of \(F_{+-}\) to the characteristic scale of the source \((\sigma, \mu, \Delta)\), and the leading correction has a simple physical meaning: in 2D, the abelian dressing rotates left- and right-moving sectors independently, and curvature couples them.

\subsection{Limiting regimes and the Clifford-support}

We now examine the two limiting regimes where one source dominates, and a crossover where both are small, using eqs.~\eqref{eq:EOM-real}--\eqref{eq:EOM-imag} directly.

\begin{enumerate}

\item \textbf{Scalar (chiral) regime: \(\sigma \neq 0,\ |\Delta|, \mu \to 0\).}
The \((a,d)\) and \((e,f,g,h)\) blocks admit trivial (zero) solutions. The non-trivial content is the \((b,c)\) pair sourced by \(\sigma\)
\begin{equation}
\partial_+(c - b) = \tfrac{1}{2}\,\sigma,
\qquad
\partial_-(c + b) = -\tfrac{1}{2}\,\sigma.
\end{equation}
For a chiral background of traveling-wave form \(\sigma(\xi) = \sigma(k x - \omega t)\), introducing \(\xi^\pm = (k \mp \omega)x^\pm / 2\) and integrating gives bounded real solutions
\begin{equation}
c(x) - b(x) = \tfrac{1}{2}\!\int^{x^+}\!\!\sigma\,dx'^+, \qquad
c(x) + b(x) = -\tfrac{1}{2}\!\int^{x^-}\!\!\sigma\,dx'^-,
\end{equation}
from which \(b\) and \(c\) are recovered. The dressing matrix in this regime is
\begin{equation}
\Omega_1 = i\bigl[b(x)\gamma^0 + c(x)\gamma^1\bigr],
\end{equation}
which encodes a position-dependent boost in the \((x^0, x^1)\)-plane where the conjugation \(\psi = e^{\Omega_1}\chi\) absorbs the chiral condensate as a local Lorentz rotation.

\item \textbf{Difermion regime: \(|\Delta| \neq 0,\ \mu \neq 0,\ \sigma \to 0\).}
The \((b,c)\) block now admits trivial solutions, and the active equations are
\begin{equation}
\partial_+(a + d) = \tfrac{1}{2}(\mu + \Delta_R), \qquad
\partial_-(a - d) = \tfrac{1}{2}(\mu - \Delta_R),
\end{equation}
\begin{equation}
\partial_+(e + h) = \tfrac{1}{2}\,\Delta_I, \qquad
\partial_-(e - h) = -\tfrac{1}{2}\,\Delta_I.
\end{equation}
Assuming traveling-wave backgrounds \(\mu(\xi)\) and \(\Delta(\xi)\) with the chiral decompositions \(\xi_r = k_r x - \omega_r t\) (for the real part) and \(\xi_i = k_i x - \omega_i t\) (for the imaginary part), bounded solutions take the form
\begin{equation}
a + d = \tfrac{1}{2}\!\int^{x^+}\!\!(\mu + \Delta_R)\,dx'^+, \qquad
a - d = \tfrac{1}{2}\!\int^{x^-}\!\!(\mu - \Delta_R)\,dx'^-,
\end{equation}
\begin{equation}
e + h = \tfrac{1}{2}\!\int^{x^+}\!\!\Delta_I\,dx'^+, \qquad
e - h = -\tfrac{1}{2}\!\int^{x^-}\!\!\Delta_I\,dx'^-.
\end{equation}
The dressing in this regime carries non-trivial \(\openone\) and \(\gamma^5\) contributions: the imaginary parts of these coefficients come from \((a, d)\), sourced by \(\mu\) and \(\Delta_R\), and the real parts come from \((e, h)\), sourced by \(\Delta_I\). The components \(f, g\) (real parts of \(\gamma^0, \gamma^1\)) and \(b, c\) (imaginary parts of \(\gamma^0, \gamma^1\)) all vanish. Physically, \(\Omega_1\) encodes a position-dependent \(U(1)\) phase (charge, generated by \(\openone\)) together with an axial \(U(1)_A\) phase (spin, generated by \(\gamma^5\)), in line with the spin-charge factorization developed below.

When \(|\Delta|\) and \(\beta\) are spacetime-constant, the chiral integrals reduce to linear functions of \(x^\pm\), giving the simple boost \(\times\) phase form anticipated by eq.~\eqref{dressed1}. Departures from constancy generate non-zero \(F_{+-} \sim \partial_\pm \Delta\) and require the corrections of eq.~\eqref{eq:integrability-correction}.

\item \textbf{Crossover region: \(\sigma, |\Delta|, \mu\) all small but non-zero.}
The two sectors couple through the leading non-abelian correction \([F_{+-}]^{(\Gamma)}\). Both chiral and difermion modes are heavy in this regime (\(m_\sigma, m_\Delta \gg \) IR scales) and the appropriate description is the truncated Magnus expansion with both \(\sigma\) and \(\Delta\) treated perturbatively. See Sec.~\ref{subsec:intermediate-regime} for the spectral and EFT analysis of this regime.

\end{enumerate}

All three regimes share the property \(F_{+-} \simeq 0\), but for two distinct physical reasons. In the scalar and difermion regimes, the dominant source is large and approximately uniform, so its gradients are small and the leading-order chirally-decoupled equations apply directly. In the crossover region, both sources are small to begin with, and the heavy in-medium masses \(m_\sigma, m_\Delta\) suppress fluctuations of the corresponding fields at low energies. In each case, corrections to \(F_{+-}\) come from small deviations from the uniform or vanishing-field limits and can be accounted for through the Magnus expansion of eq.~\eqref{eq:integrability-correction}.

Trivializability at leading Magnus order is not, by itself, a regime indicator: all three regimes are trivializable to this order. What distinguishes them is the Clifford-basis support of \(\Omega_1\), summarized in table~\ref{tab:clifford-support}. In the two ordered regimes the dressing inhabits a proper Lie subalgebra of the Clifford algebra, and the spin-charge factorization in the difermion regime is the statement that \(\Omega_1\) commutes with itself in the abelian subalgebra \(\mathrm{span}\{\openone, \gamma^5\}\). In the crossover region all eight coefficients are sourced perturbatively, the two blocks couple at \(\mathcal{O}(F_{+-})\) through \([F_{+-}]^{(\Gamma)}\) in eq.~\eqref{eq:integrability-correction}, and no subalgebra restriction applies; the spectral and effective-field-theoretic structure of this regime is the subject of section~\ref{subsec:intermediate-regime}.

\begin{table}[t]
\centering
\begin{tabularx}{\textwidth}{ll l X} 
\toprule
Regime & Active coefficients & Subalgebra & Content of \(\Omega_1\) \\ 
\midrule
Chiral & \(b, c\) & \(\mathrm{span}\{\gamma^0, \gamma^1\}\) (boost) & position-dependent boost absorbing \(\sigma\) \\ 
\addlinespace 
Difermion & \(a, d\) and \(e, h\) & \(\mathrm{span}\{\openone, \gamma^5\}\) (abelian) & \(U(1) \times U(1)_A\) charge and axial phases \\ 
\addlinespace
Intermediate & all eight & full Clifford basis & blocks coupled at \(\mathcal{O}(F_{+-})\) \\ 
\bottomrule
\end{tabularx}
\caption{Clifford-basis support of the dressing generator \(\Omega_1\) across the three regimes.}
\label{tab:clifford-support}
\end{table}

\subsection{Emergent gauge field and spin-charge confinement}

The Clifford-support diagnostic above can be reformulated in terms of an emergent gauge field that mediates the binding of spin and charge degrees of freedom in the dressed-spinor representation. In~\cite{Haddad2024}, this gauge field arises when the spin-charge decomposition \(\chi_\pm = e^{\mp\zeta/2} \phi_\pm\) is substituted into the fermion kinetic term and local fluctuations of the difermion phase \(\beta\) and the overall number phase \(\theta_N\) are introduced. Local gauge invariance is recovered by promoting ordinary derivatives to covariant ones, \(\partial_\mu \to \partial_\mu - i \mathcal{A}_\mu\), with the gauge components transforming as
\begin{equation}
\mathcal{A}_0 \to \mathcal{A}_0 - (\partial_t \theta_N + \partial_x \beta), \qquad
\mathcal{A}_1 \to \mathcal{A}_1 - (\partial_x \theta_N + \partial_t \beta).
\label{eq:gauge-field-defs}
\end{equation}
The field strength is then
\begin{equation}
\mathcal{E} \;\equiv\; -\mathcal{F}^{01} \;=\; (\partial_t^2 - \partial_x^2)\,\beta,
\label{eq:gauge-field-strength}
\end{equation}
generated by fluctuations of the difermion phase \(\beta\) and independent of the overall number phase \(\theta_N\). Physically, \(\mathcal{E}\) measures the binding between the spin and charge degrees of freedom: when \(\mathcal{E}\) is large, the spinon and chargon components of the dressed fermion are tightly bound; when \(\mathcal{E}\) vanishes, they propagate independently~\cite{Haddad2024}.

The behavior of \(\mathcal{E}\) across the three regimes is consistent with the Clifford-support diagnostic and gives a physical interpretation of it.

\begin{enumerate}

\item In the chiral phase, \(\beta\) is ill-defined (the corresponding \(U(1)\) is unbroken), and quantum fluctuations of \(\beta\) are large. The field strength \(\mathcal{E}\) is correspondingly large, and the spin and charge degrees of freedom are tightly bound. This is the confining phase for the spin-charge structure.

\item In the difermion phase, \(\beta\) is a stiff Goldstone mode of the broken \(U(1)\) symmetry. Density and phase fluctuations of \(\Delta\) decouple at low energies, and \(\mathcal{E} \to 0\) in the infrared. The spin and charge degrees of freedom propagate independently. This is the deconfining phase for the spin-charge structure, and is the regime in which the Lorentz doubling \(\mathrm{SL}(2,\mathbb{R}) \to \mathrm{SL}(2,\mathbb{R})_1 \times \mathrm{SL}(2,\mathbb{R})_2\) of~\cite{Haddad2024} is realized.

\item In the intermediate regime, both \(\sigma\) and \(\Delta\) are heavy. The fluctuations of \(\beta\) are partially suppressed by the difermion mass \(m_\Delta\), and \(\mathcal{E}\) takes intermediate values regulated by this mass. The spin and charge degrees of freedom are partially bound, and the regime interpolates between the confining and deconfining limits.

\end{enumerate}
The progression from the chiral phase to the difermion phase, via the intermediate regime, is therefore a deconfinement transition for the spin-charge degrees of freedom. The gauge field \(\mathcal{A}_\mu\) plays the role of a binding field that is strong in the chiral phase, weak in the difermion phase, and intermediate in the crossover. The Lorentz doubling and the \(\mathcal{PT}\)-symmetry breaking diagnostic of~\cite{Haddad2024} are the symmetry-group signatures of this deconfinement transition.

The emergent gauge field \(\mathcal{A}_\mu\) of eq.~\eqref{eq:gauge-field-defs} must be distinguished from the composite connection \(A_\mu^{\rm dress} = i(\partial_\mu U) U^{-1}\) built from the dressing matrix \(U\). The two are related but not identical. The composite connection \(A_\mu^{\rm dress}\) is defined for any dressing \(U\) and takes values in the full Clifford algebra; its Clifford-component support tracks the regime structure as discussed above. The emergent gauge field \(\mathcal{A}_\mu\) is the projection of \(A_\mu^{\rm dress}\) onto the abelian subalgebra \(\mathrm{span}\{\openone, \gamma^5\}\) and is the appropriate object when the dressed fields carry definite charge and axial-phase content. The two coincide in the difermion phase, where the Clifford support of \(\Omega_1\) is exactly this subalgebra and half-infinite Wilson lines in the abelian \(\gamma^5\) sector define gauge-invariant asymptotic single-particle states. In the chiral phase, \(A_\mu^{\rm dress}\) lives in the boost subalgebra \(\mathrm{span}\{\gamma^0, \gamma^1\}\), the natural Wilson-type object is the closed Wilson loop in that sector, and its area-law scaling signals the confinement of spin and charge as developed in section~\ref{subsec:wilson-lines}.

\subsection{Non-abelian corrections to the emergent gauge field}
The relation between \(\mathcal{A}_\mu\) and \(A_\mu^{\rm dress}\) just discussed has a quantitative consequence. Specifically, outside the difermion phase, the composite connection acquires Yang-Mills like corrections that promote it from an abelian gauge field to a non-abelian one. The calculation below makes the structure explicit and shows in which regime the corrections are largest.

The composite connection is computed from \(U = \exp(\Omega)\), with \(\Omega = \Omega_1 + \Omega_2 + \cdots\) the Magnus expansion of eq.~\eqref{eq:integrability-correction}. Using the standard identity \((\partial_\mu e^\Omega)\,e^{-\Omega} = \int_0^1 \!ds\,(\mathrm{Ad}\,e^{s\Omega})\,\partial_\mu \Omega\) and expanding the adjoint action in powers of \(\Omega\), one obtains the Baker--Campbell--Hausdorff series
\begin{equation}
A_\mu^{\rm dress} \;=\; i\,\partial_\mu \Omega \;+\; \frac{i}{2}\,[\Omega,\, \partial_\mu \Omega] \;+\; \frac{i}{6}\,\big[\Omega,\,[\Omega,\, \partial_\mu \Omega]\big] \;+\; \mathcal{O}(\Omega^4).
\label{eq:A-dress-BCH}
\end{equation}
Substituting \(\Omega = \Omega_1 + \Omega_2\) and keeping terms through \(\mathcal{O}(F_{+-}^{(0)})\) (i.e.\ linear in \(\Omega_2\)) gives
\begin{equation}
\begin{aligned}
A_\mu^{\rm dress} \;=\;& \underbrace{i\,\partial_\mu \Omega_1}_{\mathcal{O}(1),\;\text{abelian if }\Omega_1\in\text{abelian sub}}
\;+\; \underbrace{\frac{i}{2}\,[\Omega_1,\, \partial_\mu \Omega_1]}_{\mathcal{O}(\Omega_1^2),\;\text{non-abelian}} \\[4pt]
&\;+\; \underbrace{i\,\partial_\mu \Omega_2 \;+\; \frac{i}{2}\big([\Omega_1, \partial_\mu \Omega_2] + [\Omega_2, \partial_\mu \Omega_1]\big)}_{\mathcal{O}(F_{+-}^{(0)})}
\;+\; \cdots.
\end{aligned}
\label{eq:A-dress-expanded}
\end{equation}
The first term reproduces the leading abelian gauge field. The second is a non-abelian correction that is independent of the curvature \(F_{+-}^{(0)}\) and survives whenever \(\Omega_1\) lives in a non-abelian subalgebra of the Clifford algebra. The third group of terms gives the genuine curvature corrections. The structure constants of the Clifford algebra, computed in the basis of section~\ref{DiracConventions}, are
\begin{equation}
[\gamma^0, \gamma^1] = 2\gamma^5, \qquad
[\gamma^0, \gamma^5] = 2\gamma^1, \qquad
[\gamma^1, \gamma^5] = 2\gamma^0,
\label{eq:clifford-commutators}
\end{equation}
which form the Lie algebra \(\mathfrak{sl}(2,\mathbb{R})\) (with \(\openone\) commuting with everything and decoupling). The non-abelian sector of the composite connection therefore takes values in \(\mathfrak{sl}(2,\mathbb{R})\), consistent with the Iwasawa decomposition of the in-medium dressed-spinor symmetry group of~\cite{Haddad2024}.

The behavior of eq.~\eqref{eq:A-dress-expanded} in each regime follows from the Clifford-support pattern established in the diagnostic above, which we now comment on here.

\subparagraph{Difermion phase.} \(\Omega_1 \in \mathrm{span}\{\openone, \gamma^5\}\) is in an abelian subalgebra. The commutator \([\Omega_1, \partial_\mu \Omega_1]\) vanishes identically, and the curvature corrections involving \(\Omega_2\) also remain in the abelian subalgebra to leading order (since \(F_{+-}^{(0)}\) itself is abelian-valued when \(\Omega_1\) is). The composite connection is therefore strictly abelian in this regime:
\begin{equation}
A_\mu^{\rm dress}\big|_{\rm dif} \;=\; i\,\partial_\mu \Omega_1 \;+\; i\,\partial_\mu \Omega_2 \;+\; \mathcal{O}(F_{+-}^2),
\qquad \Omega_1, \Omega_2 \in \mathrm{span}\{\openone, \gamma^5\}.
\end{equation}
This is the regime in which \(\mathcal{A}_\mu\) of~\cite{Haddad2024} is the complete gauge field; no non-abelian elements are generated.

\subparagraph{Chiral phase.} \(\Omega_1 = i[b(x)\gamma^0 + c(x)\gamma^1]\) lives in the non-abelian boost subalgebra \(\mathrm{span}\{\gamma^0, \gamma^1\}\). Direct calculation using eq.~\eqref{eq:clifford-commutators} gives
\begin{equation}
\frac{i}{2}\,[\Omega_1, \partial_\mu \Omega_1] \;=\; i\,\big(c\,\partial_\mu b - b\,\partial_\mu c\big)\,\gamma^5,
\label{eq:chiral-noabelian-correction}
\end{equation}
so the composite connection acquires a \(\gamma^5\)-valued component sourced by the Wronskian of \((b, c)\). The corresponding field strength is
\begin{equation}
F^{\rm dress}_{tx}\Big|_{\rm chiral} \;=\; \partial_t A_x - \partial_x A_t - i[A_t, A_x] \;=\; 2i\,\big(\partial_x b\,\partial_t c - \partial_t b\,\partial_x c\big)\,\gamma^5 \;+\; \cdots,
\label{eq:F-dress-chiral}
\end{equation}
where the commutator term \(-i[A_t, A_x]\) is precisely the Yang--Mills self-coupling contribution. Eq.~\eqref{eq:F-dress-chiral} is a genuine non-abelian field strength: the Jacobian of \((b, c)\) sources a \(\gamma^5\)-valued curvature, even when the original \(F_{+-}^{(0)}\) is zero.

\subparagraph{Intermediate regime.} The dressing has support on the full Clifford basis \(\{\openone, \gamma^0, \gamma^1, \gamma^5\}\). Each of the three commutators in eq.~\eqref{eq:clifford-commutators} contributes, and the non-abelian correction \(\tfrac{i}{2}[\Omega_1, \partial_\mu \Omega_1]\) has non-zero components in all of \(\gamma^0\), \(\gamma^1\), and \(\gamma^5\):
\begin{equation}
\begin{aligned}
\frac{i}{2}\big[\Omega_1, \partial_\mu \Omega_1\big]\Big|_{\gamma^0}
&\;=\; i(d\,\partial_\mu c - c\,\partial_\mu d) + (g\,\partial_\mu h - h\,\partial_\mu g) + i\,\text{(mixed)}, \\
\frac{i}{2}\big[\Omega_1, \partial_\mu \Omega_1\big]\Big|_{\gamma^1}
&\;=\; i(d\,\partial_\mu b - b\,\partial_\mu d) + (f\,\partial_\mu h - h\,\partial_\mu f) + i\,\text{(mixed)}, \\
\frac{i}{2}\big[\Omega_1, \partial_\mu \Omega_1\big]\Big|_{\gamma^5}
&\;=\; i(c\,\partial_\mu b - b\,\partial_\mu c) + (f\,\partial_\mu g - g\,\partial_\mu f) + i\,\text{(mixed)},
\end{aligned}
\label{eq:intermediate-noabelian}
\end{equation}
where ``mixed'' refers to terms involving products of real and imaginary coefficients (e.g.\ \(b\,\partial_\mu g\), \(g\,\partial_\mu b\), and so on). The non-abelian structure is most pronounced in this regime, both because all three sectors are active and because the curvature corrections from \(\Omega_2\) further activate cross-block couplings absent at strict \(F_{+-}^{(0)} = 0\).

The relation of this non-abelian residue to the Niemi proposal for spin-charge separation as a mechanism for confinement is discussed in section~\ref{sec:synthesis}.

\section{Representations of the dressed spinor} 
\label{sec:interpretations}

As we have discussed so far, the dressed formulation $\psi(x) = U(x)\,\chi(x)$ developed in section~\ref{sec:local-fierz-complete} acquires structure specific to each of the fermion regimes. Each of these admits a natural representation, which we discuss now. These are: 1) the spin-charge factorization $U = U_c U_s$ of~\cite{Haddad2024}, most concrete in the difermion phase where it is exact at leading Magnus order; 2) the Wilson-line/Wilson-loop framework, which tracks the spin-charge deconfinement transition through the scaling of Wilson-type objects; and 3) the flat-connection holonomy, the structural language in which both regime-specific descriptions live.

\subsection{Spin-charge factorization}
\label{sec:spin-charge}

The SCS factorization $U = U_c U_s$ is the natural interpretation of the dressing in the difermion phase. In that phase $\Omega_1$ lives in the abelian Clifford subalgebra $\mathrm{span}\{\openone, \gamma^5\}$. Here, the bosonized spin-charge variables of~\cite{Haddad2024} provide the appropriate low-energy degrees of freedom. We extend the construction here and connect it to the emergent abelian gauge field that mediates the spin-charge binding outside the difermion phase.

\subsubsection{Relation to the standard bosonization derivation}

Both the present construction and the standard bosonization treatment of SCS in $(1+1)$d~\cite{LutherEmery1974,Haldane1981,Giamarchi2004} derive SCS as a long-wavelength effect around an appropriate background. The standard derivation linearizes the dispersion around the Fermi points $\pm k_F$ of the non-interacting Fermi sea and bosonizes the resulting left- and right-movers. SCS then appears as the decoupling of charge and spin bosonic sectors at the level of quadratic fluctuations around this background. The present construction takes the underlying theory to be a relativistic Dirac multiplet, so no linearization is needed. The background here is the mean-field condensate rather than the Fermi sea. The two approaches both essentially identify SCS in the long-wavelength fluctuations around a chosen background. The connection to the standard bosonization variables is made explicit through the exponential vertex-operator factors appearing in our chiral decomposition, which carry the fermion statistics of the dressed-spinor.

Thus, following~\cite{Haddad2024}, in the SCS regime we may separate the internal space as $\mathcal{H}_{\rm int} = \mathcal{H}_{U(1)_c} \otimes \mathcal{H}_{G_s}$, with generator $Q$ for $U(1)_c$ and generators $S^a$ for the spin group $G_s$. The dressing factorizes into commuting pieces,
\begin{equation}
U(x) = U_c(x)\,U_s(x), \qquad U_c = e^{-i\theta_N(x) Q}, \qquad U_s = e^{-\frac{1}{2}\beta(x)\gamma^5},
\label{eq:SCS-factorization}
\end{equation}
so that the charge phase $\theta_N$ generates a $U(1)$ rotation and the spin/boost rapidity $\beta$ generates an $\mathrm{SL}(2,\mathbb{R})$ Lorentz boost acting on $\chi$. In the chiral basis the dressed spinor reads
\begin{equation}
\psi(x) = e^{-i\theta_N(x)}\,e^{-\frac{1}{2}\beta(x)\gamma^5}\,\chi(x) = \begin{pmatrix} e^{-\frac{1}{2}(\beta+2i\theta_N)} & 0 \\ 0 & e^{+\frac{1}{2}(\beta-2i\theta_N)} \end{pmatrix} \chi(x).
\label{eq:SCS-chiral-basis}
\end{equation}
The composite connection $A_\mu^{\rm dress} = i(\partial_\mu U)U^{-1}$ splits along the factorization as
\begin{equation}
A_\mu^{\rm dress} = A_\mu^c\,Q + A_\mu^{s,a}\,S^a, \qquad A_\mu^c = \partial_\mu \theta_N, \qquad A_\mu^s = i(\partial_\mu U_s)\,U_s^{-1}\, . 
\label{eq:SCS-connection-split}
\end{equation}

\subsubsection{Charge and spin content of the potential}

The analysis above is purely structural in that it does not address the interactions directly. To do this, we start by decomposing the matrix potential as
\begin{equation}
M[\sigma, \Delta] = \underbrace{\sigma\,\openone + \mathcal{M}_\Delta^c}_{\text{commutes with }S^a} \;+\; \underbrace{\mathcal{M}_\Delta^s}_{\text{transforms in adjoint of }G_s},
\end{equation}
where $\mathcal{M}_\Delta^c$ couples to the $U(1)_c$ sector and $\mathcal{M}_\Delta^s$ carries non-abelian spin structure (off-diagonal Nambu--Gor'kov terms). When $[M, Q] = 0$ the charge and spin dressings commute, $[A_\mu^c Q, A_\nu^s] = 0$. With the light-cone definition $A_\pm = \tfrac{i}{2} P_{R/L} \gamma^\mp M$, the field strength decomposes as
\begin{equation}
F_{+-}^{\rm dress} = F_{+-}^c\,Q + F_{+-}^{s,a}\,S^a,
\end{equation}
with abelian and non-abelian components
\begin{equation}
F_{+-}^c = \partial_+ A_-^c - \partial_- A_+^c, \qquad F_{+-}^s = \partial_+ A_-^s - \partial_- A_+^s - i[A_+^s, A_-^s].
\end{equation}
The full flatness condition then factorizes:
\begin{equation}
F_{+-}^{\rm dress} = 0 \quad \Longleftrightarrow \quad F_{+-}^c = 0 \;\; \text{and} \;\; F_{+-}^s = 0.
\end{equation}
A pure scalar deformation $M = \sigma(x)\,\openone$ gives $F_{+-}^s = 0$ but generically $F_{+-}^c \neq 0$ unless $\sigma$ is constant, while a chirally factorized spin potential gives $F_{+-}^s = 0$ as in the leading-order analysis of section~\ref{sec:local-fierz-complete}, recovering exact trivialization.

Now consider fixing the rectangular path $\gamma$ of section~\ref{sec:local-fierz-complete} and write $U = e^\Omega$ with $\Omega = \Omega_c + \Omega_s$ and $\Omega_{(\cdot)} = \Omega_{1,(\cdot)} + \Omega_{2,(\cdot)} + \cdots$. If $[A_\mu^c Q, A_\nu^s] = 0$ then to $\mathcal{O}(\varepsilon^2)$ in the nearly-flat regime the mixed commutators drop out, and
\begin{equation}
\Omega_2 = \frac{1}{2}\int_\Sigma d\sigma^{+-}\Big(U_+^{c,-1}\,F_{+-}^c\,U_+^c + U_+^{s,-1}\,F_{+-}^s\,U_+^s\Big) + (\text{edge commutators}).
\end{equation}
Path dependence is thus controlled separately by the integrated curvatures $F_{+-}^c$ and $F_{+-}^s$. Inserting $M = \sigma\,\openone + \mathcal{M}_\Delta^c + \mathcal{M}_\Delta^s$, one finds $F_{+-}^c$ built from $\partial_\pm \sigma$ together with the charge-sector pieces of $\Delta$, and $F_{+-}^s$ built from the non-abelian spin sector of $\Delta$. The two pieces vanish together precisely in the difermion phase, where $\Omega_1$ lives in the abelian subalgebra $\mathrm{span}\{\openone, \gamma^5\}$ and the spin sector commutes with itself.

\subsubsection{Emergent gauge field from internal phases}

Substituting $\psi = U\chi$ with the factorization~\eqref{eq:SCS-factorization} into the free kinetic term gives, to leading order in the boost rapidity $\beta$,
\begin{equation}
\mathcal{L}_0 = i\,\bar\psi\,\gamma^\mu \partial_\mu \psi = i\,\bar\chi\,\gamma^\mu \partial_\mu \chi + (\partial_\mu \theta_N)\,\bar\chi\,\gamma^\mu \chi + \tfrac{1}{2}(\partial_\mu \beta)\,\bar\chi\,\gamma^\mu \gamma^5 \chi + \mathcal{O}(\beta^2).
\label{eq:SCS-minimal-couplings}
\end{equation}
The higher-order corrections come from the non-unitarity of $U_s =  e^{- \beta\gamma^5/2}$, which satisfies $U_s^\dagger U_s = e^{-\beta\gamma^5} \neq \openone$. Note that the number current $\bar\chi\gamma^\mu\chi$ couples to $\partial_\mu\theta_N$ and the axial current $\bar\chi\gamma^\mu\gamma^5\chi$ couples to $\partial_\mu\beta$. Promoting the phase gradients to background gauge fields gives the minimal substitution
\begin{equation}
\partial_\mu \;\longrightarrow\; D_\mu = \partial_\mu - i A_\mu - i\gamma^5 B_\mu, \qquad A_\mu \equiv \partial_\mu \theta_N, \qquad B_\mu \equiv \tfrac{1}{2}\partial_\mu \beta,
\label{eq:covariantization}
\end{equation}
where the number current couples to $A_\mu$ and the axial current to $B_\mu$. With $\Delta = |\Delta|\,e^{i\beta_\Delta}$ one identifies $\beta \equiv \beta_\Delta$ up to conventions, so $\partial_\mu \beta$ encodes phase fluctuations of the pairing channel. This abelian projection of the composite connection is the emergent gauge field $\mathcal{A}_\mu$ of section~\ref{sec:local-fierz-complete} and~\cite{Haddad2024}, and it is the relevant gauge structure in the difermion phase consistent with $\Omega_1 \in \mathrm{span}\{\openone, \gamma^5\}$, in that regime.

\subsection{Wilson lines, Wilson loops as spin-charge deconfinement diagnostic}
\label{subsec:wilson-lines}

The Wilson-line/Wilson-loop framework provides the standard diagnostic for the confinement-deconfinement transition of charged states in a gauge theory (for background, see refs.~\cite{Mandelstam1968,Wilson1974,Schwinger1962}; for the slave-particle gauge-theoretic framing of SCS as a confinement problem, see refs.~\cite{MudryFradkin1994a,MudryFradkin1994b,LeeNagaosaWen2006}). Applied to the composite connection $A_\mu^{\rm dress}$, it distinguishes the two ordered phases by the scaling of Wilson-type objects: half-infinite Wilson lines are finite in the difermion phase, where the dressed quasiparticles propagate as gauge-invariant asymptotic single-particle states, while closed Wilson loops in the boost sector develop area-law scaling in the chiral phase, where the commutator self-coupling generates the $\gamma^5$-valued field strength of eq.~\eqref{eq:F-dress-chiral} and single-particle spin and charge quasiparticles cease to be asymptotic.

\subsubsection{Half-infinite Wilson lines in the difermion phase}

In gauge theories, Wilson lines provide the gauge dressing that transforms a bare fermion into a physical, gauge-invariant observable. A half-infinite Wilson line extending from the fermion's location to infinity encodes the long-range gauge field sourced by a single isolated fermion. This describes an individual asymptotic fermion dressed with its field cloud and retains the fermion's spin and statistics. Such a construction is well-defined only when single-particle states are asymptotic, i.e., only in deconfined regimes.

In the difermion phase of the present model, the dressing factor $\exp(-i\beta\gamma^5)$ provides exactly such a Wilson-line dressing. The phase $\beta$ is the Goldstone field of the broken $U(1)_\Delta$ symmetry, $\beta(x) = -\!\int_x^\infty \partial_\mu \beta(z)\,dz^\mu$ assuming asymptotic vanishing, and the diagonal action on the chiral components is
\begin{equation}
e^{-i\beta\gamma^5} = \mathrm{diag}\!\left(e^{-i\beta}, e^{+i\beta}\right),
\end{equation}
with the redefinition $\beta \to \beta/2$ absorbed into the dressing. Defining the abelian gauge field $\mathcal{A}_\mu \equiv \partial_\mu \beta$ and the chiral gauge field $\mathcal{A}^5_\mu \equiv \gamma^5 \mathcal{A}_\mu$ constructs the half-infinite Wilson line
\begin{equation}
\mathcal{W}_{(x;\infty)} = \exp\!\left(i\int_x^\infty \mathcal{A}^5_\mu(z)\,dz^\mu\right),
\label{eq:half-infinite-wilson}
\end{equation}
under which the dressed fermion $\psi'(x) = \mathcal{W}_{(x;\infty)}\,\psi(x)$ is gauge-invariant under axial $U(1)_5$ transformations $\beta(x) \to \beta(x) + \delta(x)$ with $\delta(\infty) = 0$. The dressed fermion is the asymptotic gauge-invariant single-particle state in the difermion phase.

This construction is well-defined precisely because the difermion phase is deconfined. The abelian field strength $F^5_{+-} = \partial_+ \partial_- \beta - \partial_- \partial_+ \beta$ vanishes identically (for a smooth scalar $\beta$), and the dressing factor $e^{-i\beta\gamma^5}$ contributes no Yang-Mills curvature to leading order. The half-infinite Wilson line is then finite in the asymptotic limit and provides a sensible gauge-invariant single-particle observable. A finite Wilson line $\mathcal{W}_{(x;y)}$ between two points represents two dressed fermions bound by the residual gauge structure between them. In the difermion phase this binding is loose because the connection is flat, consistent with the SCS spectrum of asymptotic spin and charge quasiparticles. A thorough treatment of this perspective on spin-charge separation can be found in~\cite{Haddad2024}.

\subsubsection{Closed Wilson loops and area law in the chiral phase}
The chiral phase has the opposite structure. The Clifford-support diagnostic of section~\ref{sec:local-fierz-complete} places $\Omega_1$ in the non-abelian boost subalgebra,
\begin{equation}
\Omega_1 = i\left[b(x)\gamma^0 + c(x)\gamma^1\right],
\end{equation}
with $b$ and $c$ sourced by the chemical potential and the difermion real and imaginary parts respectively, as in the analysis of section~\ref{sec:local-fierz-complete}. The composite connection takes values in $\mathrm{span}\{\gamma^0, \gamma^1\}$, and a half-infinite Wilson line in this sector is not finite. The phase $\beta$ of the difermion field is ill-defined in the chiral phase (the corresponding $U(1)_\Delta$ is unbroken), so the $\gamma^5$-sector half-infinite construction does not apply. The natural Wilson-type object in this phase is the closed Wilson loop in the boost subalgebra.

For a closed loop $C$ in spacetime bounding a region $S$, the Wilson loop is
\begin{equation}
W(C) = \mathrm{Tr}\,\mathcal{P}\!\exp\!\left(i\oint_C A_\mu^{\rm dress}\,dx^\mu\right),
\label{eq:wilson-loop-def}
\end{equation}
with path ordering $\mathcal{P}$ along $C$. The composite connection in the chiral phase is $A_\mu^{\rm dress} = -(\partial_\mu b)\gamma^0 - (\partial_\mu c)\gamma^1$ at leading Magnus order, valued in the non-abelian boost subalgebra. The non-abelian field strength is
\begin{equation}
F^{\rm dress}_{tx} = \partial_t A_x^{\rm dress} - \partial_x A_t^{\rm dress} - i[A_t^{\rm dress}, A_x^{\rm dress}].
\label{eq:F-tx-nonabelian}
\end{equation}
The derivative part vanishes by equality of mixed partials. The commutator part, computed using $[\gamma^0, \gamma^1] = 2\gamma^5$ from eq.~\eqref{eq:clifford-commutators}, gives
\begin{equation}
F^{\rm dress}_{tx} = 2i\,J(b,c)\,\gamma^5, \qquad J(b,c) \equiv \partial_x b\,\partial_t c - \partial_t b\,\partial_x c,
\label{eq:F-jacobian-form}
\end{equation}
which is the $\gamma^5$-valued field strength of eq.~\eqref{eq:F-dress-chiral} written in terms of the Jacobian of the map $(t,x) \mapsto (b,c)$. The field strength lies in the abelian one-dimensional sub-algebra generated by $\gamma^5$, even though the connection itself lies in the two-dimensional non-abelian boost sub-algebra. This is the standard feature of the non-abelian commutator: the curvature has support in the bracket of the algebra generators, here $\gamma^5$.

The non-abelian Stokes theorem~\cite{Arefeva1980,Bralic1980,DiakonovPetrov1989} expresses the path-ordered line integral as a path-ordered surface integral $W(C) = \mathrm{Tr}\,\mathcal{P}_S\exp(i\int_S U^{-1}(x_0,x)\,F(x)\,U(x_0,x)\,dS)$, where $U(x_0, x)$ is the parallel transport from a basepoint $x_0$ to $x$ along a chosen surface path. The conjugation $U^{-1}F U$ generically rotates the algebra value of $F$, since the boost-subalgebra generators do not commute with $\gamma^5$. For our geometry the rotation is controlled by the $\mathfrak{sl}(2,\mathbb{R})$ structure: $[\gamma^0, \gamma^5] = 2\gamma^1$ and $[\gamma^1, \gamma^5] = 2\gamma^0$ rotate $\gamma^5$ into the boost subalgebra. For weak dressings, or equivalently for surfaces with area small compared to the inverse-curvature scale, the conjugation is close to the identity and $U^{-1}F U \approx F$, so the path-ordered surface exponential reduces to an ordinary surface integral. In this leading-order regime the Wilson loop becomes
\begin{equation}
W(C) = \mathrm{Tr}\,\exp\!\left(i\int_S F^{\rm dress}_{tx}\,dt\,dx\right) = \mathrm{Tr}\,\exp\!\left(-2\,\Phi_5(C)\,\gamma^5\right),
\label{eq:wilson-loop-evaluated}
\end{equation}
where
\begin{equation}
\Phi_5(C) \equiv \int_S J(b,c)\,dt\,dx
\label{eq:gamma5-flux}
\end{equation}
is the integrated $\gamma^5$ flux through the surface $S$. Non-leading corrections from the non-abelian parallel transport modify the precise tension coefficient but not the area-law structure of the result. In the chiral basis where $\gamma^5 = \mathrm{diag}(+1, -1)$, the trace evaluates to
\begin{equation}
W(C) = 2\cosh\!\left(2\,\Phi_5(C)\right).
\label{eq:wilson-loop-final}
\end{equation}

The scaling of $\langle W(C)\rangle$ with the geometry of $C$ is determined by the statistics of the flux $\Phi_5(C)$ in the chiral phase. The boost-sector fields $b(x)$ and $c(x)$ are not Goldstone modes (the corresponding $U(1)_\Delta$ is unbroken in the chiral phase) and therefore have no stiffening that would suppress their fluctuations. They fluctuate with short-range correlations characterized by a correlation length $\xi$ and a variance scale $\langle J^2\rangle$, and by the symmetries of the chiral-phase vacuum the mean flux density vanishes, $\langle J(b,c)\rangle_{\rm chiral} = 0$. The Wilson-loop expectation value is conveniently computed in the Euclidean continuation $\Phi_5 \to i\Phi_5^E$, under which the Minkowski expression $W(C) = 2\cosh(2\Phi_5)$ becomes the Euclidean Wilson loop
\begin{equation}
W_E(C) = 2\cos\!\left(2\Phi_5^E(C)\right).
\label{eq:wilson-euclidean}
\end{equation}
The Euclidean continuation is the standard setting for area-law diagnostics, where the analog statement in Yang-Mills theory is the expectation value $\langle\mathrm{Tr}\,U(C)\rangle \sim \exp(-\sigma\cdot\mathrm{Area})$ of the closed Wilson loop on the Euclidean lattice~\cite{Wilson1974}.

For Gaussian fluctuations with vanishing mean, the variance of the flux scales with the surface area. Writing
\begin{equation}
\langle\Phi_5^E(C)^2\rangle = \int_S\int_S \langle J(x)J(y)\rangle\,d^2x\,d^2y,
\label{eq:phi-variance-integral}
\end{equation}
the integrand $\langle J(x)J(y)\rangle$ is supported on a region of size $\xi$ around the diagonal $x = y$. For surfaces $S$ with area much larger than $\xi^2$, the integral is dominated by the diagonal contribution and scales linearly with the area:
\begin{equation}
\langle\Phi_5^E(C)^2\rangle \approx \langle J^2\rangle\,\xi^2\,\mathrm{Area}(S).
\label{eq:phi-variance-area}
\end{equation}
The Gaussian identity $\langle\cos(kX)\rangle = \exp(-k^2\langle X^2\rangle/2)$ for $X$ Gaussian with mean zero then gives the area-law expectation value
\begin{equation}
\langle W_E(C)\rangle = 2\,\langle\cos(2\Phi_5^E)\rangle = 2\exp\!\left(-2\langle\Phi_5^E(C)^2\rangle\right) \sim \exp\!\left(-\sigma_{\rm sc}\cdot\mathrm{Area}(S)\right),
\label{eq:area-law}
\end{equation}
with spin-charge string tension
\begin{equation}
\sigma_{\rm sc} = 2\,\langle J^2\rangle\,\xi^2.
\label{eq:string-tension}
\end{equation}
This is the standard exponential-in-area decay characteristic of confinement: separating an asymptotic spinon-chargon pair to large distance $R$ costs energy proportional to $R$, with effective string tension $\sigma_{\rm sc}$.

Eq.~\eqref{eq:F-jacobian-form} makes the physical content of the area law explicit. The $\gamma^5$-valued field strength is generated by configurations in which $b(x)$ and $c(x)$ vary independently across the spacetime region, with non-zero local Jacobian $J(b,c)$. In the chiral phase, the boost-sector fields are unstiffened and their fluctuations are large; the spin-charge string tension $\sigma_{\rm sc}$ encodes the strength of these fluctuations multiplied by their correlation area. As the system is driven toward the difermion phase, the Clifford support of $\Omega_1$ rotates out of the boost sector and into the abelian $\{\openone,\gamma^5\}$ subalgebra. The boost-sector fields cease to be the relevant dressing variables, the variance $\langle J^2\rangle$ associated with them is no longer a meaningful quantity for the dressing, and $\sigma_{\rm sc} \to 0$ at the transition.

\subsubsection{The diagnostic across the phase diagram}
The two Wilson-type objects developed above give complementary scalings that interpolate continuously across the phase diagram. In the difermion phase, the half-infinite Wilson line $\mathcal{W}_{(x;\infty)}$ is finite, and dressed quasiparticles propagate as deconfined single-particle states. The closed Wilson loop reduces to a perimeter law: the abelian $\gamma^5$-sector loop is bounded by $2\cos(\oint_C \mathcal{A}^5_\mu\,dx^\mu)$, which depends on the perimeter of $C$ but not on the enclosed area. In the chiral phase, the half-infinite Wilson line in the boost sector is divergent (no asymptotic boost-sector single-particle states), and the closed Wilson loop has the area-law form of eq.~\eqref{eq:area-law}. Spin-charge quasiparticles are confined with tension $\sigma_{\rm sc} = 2\langle J^2\rangle\,\xi^2$. In the intermediate regime, $\Omega_1$ has support on the full Clifford basis, the field strength has contributions from both abelian and boost sectors, and the Wilson loop interpolates between the two scalings.

The Wilson-line/Wilson-loop diagnostic therefore provides a direct realization, internal to the present construction, of the structural reading developed in section~\ref{sec:synthesis}. The chiral-to-difermion transition is the deconfinement transition for the spin-charge degrees of freedom in the standard area-law sense of gauge theory, with the explicit tension formula~\eqref{eq:string-tension} computed from the dressing structure of the model.

\subsection{Formal geometric structure}
\label{subsec:flat-connection}

Here, we provide a brief comment on the geometric structure inherent in the dressed formulation. The dressing $U(x)$ takes values in the group generated by the Clifford algebra, and the composite connection $A_\mu^{\rm dress} = i(\partial_\mu U)U^{-1}$ is the content of a principal-bundle connection over $(1+1)$d spacetime, with the dressed spinor $\chi(x) = U^{-1}(x)\psi(x)$ and the field $\psi$ viewed in the local frame chosen by the connection. The two regime-specific representations developed above are reductions of this bundle to sub-bundles whose structure group is a proper Clifford subalgebra: the SCS factorization reduces the bundle to the abelian subgroup associated with $\mathrm{span}\{\openone, \gamma^5\}$, whereas the chiral-phase Wilson construction reduces it to the non-abelian boost subgroup; in the intermediate regime no proper reduction exists.

The two Wilson-type objects of section~\ref{subsec:wilson-lines} probe two distinct geometric features of this connection. The bulk of $(1+1)$d Minkowski space is contractible, so a flat connection on a patch carries no holonomy; the interesting topological content of the dressing lives at infinity, where the dressing field satisfies asymptotic boundary conditions. The half-infinite Wilson line detects this asymptotic structure, and its finiteness in the difermion phase reflects the asymptotic vanishing of the Goldstone $\beta$. The closed Wilson loop instead probes non-flatness of the connection inside the loop, and its area law in the chiral phase is the statement that $F^{\rm dress}_{+-} \neq 0$ at the relevant order, with the $\gamma^5$-valued curvature of eq.~\eqref{eq:F-jacobian-form} providing the integrated flux.

\section{Discussion}
\label{sec:synthesis}

The technical results that we have developed connect to a longstanding proposal in non-abelian gauge theory. The chiral and difermion phases of the model use complementary variables. Quantities that are classical and well-defined in one phase are strongly fluctuating or suppressed in the other, and the dressing $U(x)$ is the transformation that interpolates between the two variable sets in the manner familiar from $(1+1)$d bosonization~\cite{Coleman1975,Mandelstam1975Bosonization,Witten1984Bosonization}. In this section we discuss the relation of this structure to the CFN proposal of Niemi et al. that SCS may underlie color confinement. We will discuss the sense in which our present model realizes that proposal and describe a mapping onto the structural regions of the QCD phase diagram.

\subsection{The Niemi proposal for spin-charge separation as a confinement mechanism}

In this section we will discuss two distinct correspondences with non-abelian gauge theory. The first is a confinement-mechanism analogy with pure Yang--Mills theory: following the proposal of Niemi and collaborators, the spin-charge-separated phase plays the role of the confining vacuum. The second is a phase-diagram analogy with finite-density QCD, in which the three regimes of the model map onto three structural regions of the QCD phase diagram (section~\ref{subsec:qcd-phase-diagram}). The two are related interpretations of the same chiral-to-difermion transition, but they belong to different physical settings; one concerning the Yang--Mills vacuum, the other quark matter at finite density. We begin with the Yang--Mills mechanism.

The proposal of~\cite{NiemiSCSConfinement2005,FaddeevNiemi2007,NiemiWalet2005} is that the confining phase of four-dimensional SU(2) Yang-Mills theory is precisely the SCS phase of the gluon. In the Cho-Faddeev-Niemi decomposition, the gluon is reorganized into a spin field $\hat{n}(x)$ and two charged scalars. At high energies these constituents are tightly bound, and the Yang-Mills Lagrangian describes asymptotically free massless gauge bosons. In the low-energy domain, the charged scalars condense and the spin-charge components decouple. The resulting low-energy effective action takes the form of a coupled O(3) and Grassmannian sigma model, which supports knotted closed string solitons. The confining vacuum is then structurally analogous to a high-$T_c$ superconductor: condensation of charged scalars produces an effect that has been called a dual Meissner effect~\cite{Mandelstam1976Vortices,tHooft1981Topological}, with chromoelectric flux squeezed into the flux tubes that bind quarks.

That proposal has driven significant subsequent work but not a concrete realization, in the strict sense. As discussed in section~\ref{sec:introduction}, three obstacles remain: projective Lorentz action in the O(3) sector, absence of a derivation from the path integral, and no explicit construction of the spin-charge-separated fermion. The present model exhibits these features explicitly, albeit in the more limited setting of $(1+1)$d where the obstacles are absent.

\subsubsection{Two approaches to the Niemi correspondence}

The correspondence between the present model and the Niemi picture proceeds along two complementary lines:

\paragraph{SCS as deconfinement of spin and charge.}
As we have discussed, in the chiral phase ($\langle\sigma\rangle \neq 0$, $\langle\Delta\rangle = 0$), the emergent gauge field $\mathcal{A}_\mu$ of section~\ref{sec:local-fierz-complete} has large field strength $\mathcal{E}$, the spinon and chargon components of the dressed fermion are tightly bound, and there is no spin-charge separation. In the difermion phase ($\langle\sigma\rangle = 0$, $\langle\Delta\rangle \neq 0$), by contrast, the difermion phase $\beta$ becomes a stiff Goldstone, $\mathcal{E} \to 0$, the dressed components $\phi_\pm$ propagate independently, and SCS is realized. The intermediate regime interpolates between the two limits. The chiral-to-difermion transition is therefore a deconfinement transition for the spin and charge degrees of freedom, with $\mathcal{A}_\mu$ playing the role of the binding field.

The structural correspondence with the Niemi picture is direct. In Niemi's proposal, the YM confining phase is the SCS phase, and the IR weakening of the spin-charge binding interaction is the mechanism for the separation. In the present model, the SCS phase is the difermion phase, and the IR weakening of $\mathcal{A}_\mu$ is the mechanism. The condensation of a charged scalar (the difermion $\Delta$, which carries fermion number 2) plays the role that Niemi assigns to the charged-scalar condensation of the CFN decomposition.

\paragraph{Emergence of non-abelian gauge structure away from the SCS regime.}
The composite connection $A_\mu^{\rm dress}$ in eq.~\eqref{eq:A-dress-expanded} is abelian in the difermion phase, where $\Omega_1 \in \mathrm{span}\{\openone, \gamma^5\}$. Outside this regime, the commutator $[\Omega_1, \partial_\mu \Omega_1]$ is non-zero and the connection acquires Yang-Mills-like corrections valued in the $\mathfrak{sl}(2,\mathbb{R})$ algebra generated by the Clifford commutators of eq.~\eqref{eq:clifford-commutators}. The field strength $F^{\rm dress}_{tx}$ in the chiral phase contains a $\gamma^5$-valued contribution from the commutator self-coupling, as in eq.~\eqref{eq:F-dress-chiral}. The intermediate regime exhibits the full $\mathfrak{sl}(2,\mathbb{R})$ structure with all generators active.

The second perspective complements the first. The first focuses on the three regimes for confined/intermediate/deconfined phases of the spin-charge structure. The second focuses on the non-abelian character of the composite gauge connection: non-abelian in the chiral phase, most non-abelian in the intermediate regime, and abelian in the difermion phase. The two organizations are compatible: in the difermion phase, both threads agree, since SCS is realized (thread one) and the gauge connection is abelian (thread two). In the confining vacuum as the Niemi proposal describes it, both features coincide: it is a dual superconductor with abelian-dominated low-energy structure, and SCS is realized. The difermion phase of the present model exhibits the same combination of structural features. The chiral phase has the opposite features: no SCS (thread one), non-abelian $A_\mu^{\rm dress}$ (thread two). The intermediate regime is where both features change character. Some of the structural correspondences are summarized in Table~\ref{tab:niemi-difermion}.
\begin{table}[h]
\centering
\renewcommand{\arraystretch}{1.2}
\begin{tabularx}{\textwidth}{@{}>{\raggedright\arraybackslash}p{0.265\textwidth}>{\raggedright\arraybackslash}p{0.30\textwidth}>{\raggedright\arraybackslash}X@{}}
\toprule
Feature & Niemi-confining YM & Present difermion phase \\
\midrule
Charged scalar condensate & CFN-charged scalars condense & $\langle\Delta\rangle \neq 0$ (Cooper-pair condensate) \\
Spin-charge separation & Realized in IR & Realized in IR for $\mathcal{E}$ \\
Phase of charged condensate & Stiff Goldstone & Stiff $\beta = \arg\Delta$ \\
Gauge binding strength & Weak in bulk (dual Meissner) & $\mathcal{E} \to 0$ in IR \\
Symmetry signature & Projective Lorentz action & Linear $\mathrm{SL}(2,\mathbb{R})_1 \times \mathrm{SL}(2,\mathbb{R})_2$ doubling \\
Low-energy gauge structure & Abelian-dominated & Abelian (strict, in this regime) \\
Discrete symmetry & not identified & $\mathcal{PT}$-symmetry breaking~\cite{Haddad2024} \\
\bottomrule
\end{tabularx}
\caption{Structural correspondences between the Niemi-confining Yang--Mills picture and the difermion phase of the present model.}
\label{tab:niemi-difermion}
\end{table}

\subsubsection{Three regimes and the QCD phase diagram}
\label{subsec:qcd-phase-diagram}

Combining the two threads gives a richer mapping than either alone. The three regimes of the present model correspond structurally to three regions of the QCD phase diagram, as summarized in Table~\ref{tab:qcd-regimes}, which extends the vacuum-level comparison of Table~\ref{tab:niemi-difermion} from the confining vacuum to the full set of regimes.
\begin{table}[h]
\centering
\renewcommand{\arraystretch}{1.3}
\begin{tabularx}{\textwidth}{@{}>{\raggedright\arraybackslash}p{0.24\textwidth}>{\raggedright\arraybackslash}p{0.27\textwidth}>{\raggedright\arraybackslash}X@{}}
\toprule
Present model & Structural QCD analog & Reasoning \\
\midrule
Chiral phase ($\langle\sigma\rangle \neq 0$) & Quark--gluon-plasma-like & Elementary fermion a sharp, massive excitation (matter deconfined); spin and charge bound, $\mathcal{E}$ large (no SCS); non-abelian $A_\mu^{\rm dress}$. \\
Intermediate regime & Critical/crossover region & Both $\sigma$ and $\Delta$ heavy; no sharp quasiparticle; full $\mathfrak{sl}(2,\mathbb{R})$ structure; partial SCS. \\
Difermion~phase ($\langle\Delta\rangle \neq 0$) & Hadronic (confining) vacuum & Elementary fermion paired and fractionalized (matter confined); $\mathcal{E}\to 0$, spinon and chargon free (SCS realized); abelian low-energy gauge sector. \\
\bottomrule
\end{tabularx}
\caption{Structural correspondence between the three regimes of the present model and three regions of the QCD phase diagram, along the confinement and spin-charge axes.}
\label{tab:qcd-regimes}
\end{table}
The organizing fact behind this mapping is that confinement of the elementary fermion and separation of its spin and charge run in opposite directions across the phase diagram.\footnote{The terms ``confinement'' and ``deconfinement'' are used here in the structural sense of the present construction---binding versus liberation of the elementary fermion, and binding versus separation of its spin and charge. They are distinct from the thermal color-deconfinement of QCD, whose order parameter is the Polyakov loop and whose center-symmetry structure has no analog in the present $(1+1)$d model.} In the chiral phase the elementary fermion is a sharp, massive excitation; the natural degree of freedom (spin and charge remain tightly bound). The fermion is therefore deconfined and spin-charge is confined, which places the phase on the quark--gluon-plasma-like side: the regime in which the elementary quanta propagate freely and no spin-charge structure has emerged. In the difermion phase the situation is reversed. The charged condensate $\langle\Delta\rangle$ pairs the elementary fermion and removes it from the spectrum as a sharp excitation, while $\mathcal{E}\to 0$ lets the spinon and chargon propagate independently. The fermion is confined and spin-charge is deconfined, placing the phase on the hadronic, confining side. In the intermediate regime, both condensates are heavy and quasiparticles are not well-defined.

This anti-correlation is the structural content of the Niemi proposal: the confining (hadronic) analog is the spin-charge-separated phase (here the difermion phase) rather than the phase in which the elementary fermion is the good excitation. In QCD the hadronic vacuum confines quarks while the gluon, in the Niemi picture, is spin-charge separated; mapping quarks to the elementary fermion and the gluon spin-charge structure to ours, the difermion phase reproduces precisely this pairing of confined matter with separated spin and charge.

The correspondence holds along the confinement and spin-charge axes but \emph{not} along the chiral-condensate axis. The difermion phase has $\langle\sigma\rangle = 0$ and is chirally restored, whereas the hadronic vacuum it is matched to has $\langle\bar{q}q\rangle \neq 0$ and is chirally broken. We therefore do not identify $\sigma$ with the QCD chiral condensate; in the present mapping $\sigma$ marks the phase in which the elementary fermion is gapped but unpaired, which is the deconfined side. The naive identification: chiral phase as hadronic, difermion phase as deconfined; is the one that follows from reading $\sigma$ as $\langle\bar{q}q\rangle$, and it is exactly the identification that the confinement and spin-charge criterion reverses.

\section{Conclusion}
\label{sec:conclusion}

We have developed a self-consistent dressing framework $\psi(x) = U(x)\chi(x)$ for $(1+1)$-dimensional Dirac fermions with Fierz-complete four-fermion interactions. The dressing matrix $U(x)$ is built from the condensate background, and the composite connection $A_\mu^{\rm dress} = i(\partial_\mu U)U^{-1}$ encodes the obstruction to local trivialization of the Dirac operator. The framework extends the spin-charge separation construction of~\cite{Haddad2024} from the strong-pairing regime to the full Fierz-complete phase diagram, and displays three previously distinct non-perturbative constructions (SCS, Wilson-line dressing, flat-connection holonomy) as regime limits of the single dressing $U(x)$. The composite connection acquires Yang--Mills-like corrections at next-to-leading order in the curvature, with structure constants in the $\mathfrak{sl}(2,\mathbb{R})$ algebra of Clifford commutators. The discussion of section~\ref{sec:synthesis} places these results in the context of the Faddeev-Niemi proposal. The condensation of a charged scalar, the onset of spin-charge separation, and the change in non-abelian character of the composite connection are features of the phase transition which we have shown are also features of fermions in Fierz complete theories. Our work points to several potential future calculations.

First, the explicit calculation of the corrected gauge field $A_\mu^{\rm dress}$ at $\mathcal{O}(F_{+-})$ in section~\ref{sec:local-fierz-complete} establishes that the non-abelian portion is most pronounced in the intermediate regime. A natural next step is the computation of the effective dynamics of this non-abelian sector. For instance, the leading (quadratic) part of these dynamics should be computed where we suspect that integrating out the dressed fermions generates an induced Maxwell term for the composite connection with coefficient fixed by the same one-loop scale as the wave-function renormalization. The eqs.~\eqref{eq:intermediate-noabelian} for the leading non-abelian correction in the intermediate regime depend on cross-couplings between the real-coefficient and imaginary-coefficient blocks of $\Omega_1$. The structure of these cross-couplings, and their consequences for low-energy excitations, has not been worked out.

Second, the QCD-analog mapping above is not dynamical. A natural question is whether any analog of confinement dynamics, such as a linear potential between probe charges or a string tension, can be identified in the present model. The (1+1)d setting precludes flux tubes, but the analog of confining static potentials between heavy probe fermions in the difermion phase, if any, may be accessible through Wilson-loop calculations along the lines of section~\ref{sec:interpretations}.

Third, the non-abelian corrections we have computed in $A_\mu^{\rm dress}$ take values in the $\mathfrak{sl}(2,\mathbb{R})$ algebra of Clifford commutators. This is the symmetry algebra of the in-medium dressed-spinor representation, but it is not the gauge algebra of any familiar gauge theory. The interpretation of this algebra as a hidden symmetry of the four-fermion model and its possible relation to higher-spin or asymptotic-symmetry structures could be a fruitful future investigation.

Finally, the framework developed here, together with the construction of~\cite{Haddad2024}, provides a setting in which several conjectural ideas about strongly-correlated relativistic fermion physics admit explicit and computable realizations. The $(1+1)$d setting is of course limited, but some of the structural connections that emerge are not necessarily specific to one spatial dimension and one might expect some to carry over to higher-dimensional analogs of the same construction.

\acknowledgments
L. H. acknowledges that this work was partly carried out through the Colorado School of Mines Physics Senior Design program, under
which the student co-authors contributed
calculations, discussions, and independent verification of intermediate results
throughout the project's development. It has been a privilege to mentor such
talented undergraduates, and the program is gratefully acknowledged for making
this collaboration possible. The authors also thank the Department of Physics at
Colorado School of Mines for its support during the writing of this manuscript.

\appendix

\section{Conventions for Dirac fermions in $(1+1)$d }
\label{DiracConventions}

We review here some of the fundamentals of Dirac spinors in $(1+1)$ dimensions (for background, see Refs.~\cite{Thirring1958,Coleman1975,GrossNeveu1974}). This is by no means an exhaustive introduction but simply a few relevant basics. Beginning with the most basic structure that holds regardless of which particular interactions are involved, assuming finite mass $m$ and chemical potential $\mu$, we first review a standard representation in which the gamma matrices are real. We will then look at the same structure using a complex representation. With the time-like signature $g^{\mu \nu}= \mathrm{diag}\left( 1 , - 1 \right)$ we take the $2 \times 2$ gamma matrices to be defined in terms of the Pauli matrices as
\begin{eqnarray}
\gamma^0 = \sigma_1 \; , \;\;  \gamma^1 = - i  \sigma_2 \;, \;\; \gamma^5 =   \gamma^0   \gamma^1  = \sigma_3 \, ,  \label{myalgebra}
\end{eqnarray}
which one finds satisfy the Dirac algebra  
\begin{eqnarray}
\left\{ \gamma^\mu , \, \gamma^\nu \right\} = 2 g^{\mu \nu}\, . 
\end{eqnarray}
In addition, we will need the charge conjugation operation which is defined as $\psi^C \equiv \gamma^5 \psi^*$, where $\psi^C$ is the charge-conjugated spinor wavefunction determined by the condition $C \gamma^\mu C^{-1} = - (-1)^\mu (\gamma^\mu )^T$. We choose the sign convention on the right hand side appropriate for our choice of real gamma matrices above. Hence, taking $C = \gamma^5 = \sigma_3$ satisfies this condition. The Dirac adjoint is defined by $\bar{\psi} = \psi^\dagger \gamma^0$. The Lagrangian density for free relativistic fermions with finite mass and chemical potential is then
\begin{eqnarray}
\hspace{-1pc} \mathcal{L}  &=&  \bar{\psi}\left( i \gamma^\mu  \partial_\mu - m  + \mu \gamma^0  \right)   \psi   \, . 
\end{eqnarray}
Varying $\mathcal{L}$ with respect to $\bar{\psi}$ gives the Dirac equation for the two-dimensional spinor $\psi = (\psi_1, \,  \psi_2)^T$:
\begin{eqnarray}
i (\partial_t - \partial_x ) \psi_1  - m \psi_2 + \mu \psi_1 &=& 0  \, ,\\
i (\partial_t + \partial_x ) \psi_2  - m \psi_1 + \mu \psi_2 &=& 0 \, , 
\end{eqnarray}
which has plane-wave solutions
\begin{eqnarray}
\psi_\pm(x, t) =   e^{i (px  -  E_\pm t)}  \left( \begin{array}{l}
 \sqrt{ \frac{ E_\pm  + \mu - p  }{m} }  \\
           \sqrt{ \frac{m}{ E_\pm  + \mu - p  }}       \end{array} \right) \, . 
\end{eqnarray}
Here, the sign in the subscript denotes positive and negative energy solutions. The associated momentum space Dirac operator for finite mass and chemical potential is
\begin{eqnarray}
\mathcal{D}_\pm  &=& \left( \begin{array}{ll}
  E + \mu + p      &   \;\;\;\; \mp m    \vspace{0pc}\\
\; \;  \mp m  &       E + \mu - p          \end{array} \right) \, , 
\end{eqnarray} 
which determines positive and negative energy solutions to the Dirac equation
\begin{eqnarray}
\mathcal{D}_\pm  \,  \psi = 0 \, . \label{DopEq}
\end{eqnarray} 
The Dirac operator can be reparametrized to the hyperbolic form 
\begin{eqnarray}
\mathcal{D}_\pm  &=& \left( \begin{array}{ll}
 \; e^\eta      &   \;  \mp 1   \vspace{0pc}\\
  \mp 1 &     \;    e^{-\eta}       \end{array} \right) \, , 
\end{eqnarray} 
where $\mathrm{cosh}\, \eta \equiv (E + \mu)/m$, $\mathrm{sinh} \, \eta \equiv  p/m$, and $\mathrm{tanh} \, \eta = p/(E+ \mu)$. Here, $\eta$ is the usual rapidity which one encounters with generators of $(1+1)$d Lorentz transformations $\Lambda$
\begin{eqnarray}
\Lambda  &=& \left( \begin{array}{ll}
 \; e^\eta      &   \;   0   \vspace{0pc}\\
   \; 0  &      e^{-\eta}       \end{array} \right)  = e^{\eta \gamma^5} \, .
\end{eqnarray} 
Thus, the relativistic limit of the in-medium (finite $\mu$, small $m$, large Fermi surface) Dirac operator amounts to an ordinary Lorentz transformation. Solutions of Eq.~\eqref{DopEq} have the form 
\begin{eqnarray}
\psi_\pm   &=& \phi_\pm  \left( \begin{array}{l}
  e^{- \eta/2}    \\
  \, \pm  e^{\eta/2}     \end{array} \right) \, .  \label{SinglePartState}
\end{eqnarray} 
Alternatively, one may choose different real parametrization wherein $\mathrm{cos}\, \phi \equiv m/(E + \mu)$, $\mathrm{sin} \, \phi \equiv  p/(E + \mu)$, and $\mathrm{tan} \, \phi  = p/m$, which gives 
\begin{eqnarray}
\mathcal{D}_\pm &=& \left( \begin{array}{ll}
 1 + \mathrm{sin} \, \phi      &   \;  \mp  \mathrm{cos} \, \phi  \vspace{0pc}\\
  \mp  \mathrm{cos} \, \phi &     \;     1 - \mathrm{sin} \, \phi        \end{array} \right) \, , 
\end{eqnarray} 
with solutions of the form 
\begin{eqnarray}
\psi_\pm   &=& \theta_\pm   \left( \begin{array}{l}
  \mathrm{cos} \, \phi/2   \\
   \pm  \mathrm{sin} \, \phi/2    \end{array} \right) \, . 
  \end{eqnarray}

One may also obtain complex solutions by choosing the reduced gamma matrices to be
\begin{eqnarray}
\gamma^0 =   \sigma_1 \; , \;\;  \gamma^1 = i  \sigma_3 \;, \;\; \gamma^5 =   \gamma^0   \gamma^1  =   \sigma_2 \, , \;\; C = - i \gamma^1 = \sigma_3 \,  ,  \label{altalgebra}
\end{eqnarray}
where, again we have that $C$ satisfies $C \gamma^\mu C^{-1} = - (-1)^\mu (\gamma^\mu )^T$. With this choice the Dirac equation is 
\begin{eqnarray}
     (i \partial_t + \mu) \psi_1  - (  \partial_x  + m  ) \psi_2  &=& 0  \, ,\\
  (i \partial_t + \mu) \psi_2   + (\partial_x  - m ) \psi_1 &=& 0 \, , 
\end{eqnarray}
with Dirac Hamiltonian
\begin{eqnarray}
\hat{H}_D = i \gamma^0 \gamma^1 \partial_x + m \gamma^0 - \mu \, . 
\end{eqnarray}
Plane-wave solutions are
\begin{eqnarray}
\psi_\pm(x, t) =   e^{i (px  -  E_\pm t)}  \left( \begin{array}{l}
 \sqrt{ \frac{ E_\pm  + \mu   }{m - i p} }  \\
           \sqrt{ \frac{m - i p }{ E_\pm  + \mu  }}       \end{array} \right) \, , 
\end{eqnarray}
and have dispersion $E_\pm + \mu = \pm \sqrt{ p^2 + m^2}$. The momentum space Dirac operator takes the form
\begin{eqnarray}
\mathcal{D}_\pm  &=& \left( \begin{array}{ll}
 ( i  p - m)      &   \;\;    \;\mp  ( E + \mu) \vspace{0pc}\\
 \;\; \mp (E + \mu)  &     \;\; - ( i  p + m  )          \end{array} \right) \, , 
\end{eqnarray} 
with positive and negative energy branches satisfying $\mathcal{D}_\pm  \psi_\pm  = 0$. A trigonometric reparametrization leads to 
\begin{eqnarray}
\mathcal{D}_\pm  &=& \left( \begin{array}{ll}
 \; e^{-i \phi}      &   \;  \mp  1   \vspace{0pc}\\
  \mp 1 &     \;    e^{i \phi }       \end{array} \right) \, , 
\end{eqnarray} 
where $\mathrm{cos}(\phi) \equiv  m/(E +  \mu)$, $\mathrm{sin} ( \phi ) \equiv  p/(E + \mu)$, and $\mathrm{tan}( \phi )= p/m$, and solutions now take the form 
\begin{eqnarray}
\psi_\pm  &=& \theta \left( \begin{array}{l}
  e^{ i \phi /2}    \\
  \, \pm  e^{  -  i \phi/2}     \end{array} \right) \, .  \label{SinglePartState-a}
\end{eqnarray} 
In this complex form, Lorentz transformations are now given by
\begin{eqnarray}
\Lambda  &=& \left( \begin{array}{ll}
 \; e^{- i \phi}     &   \;   0   \vspace{0pc}\\
   \; 0  &      e^{i \phi}       \end{array} \right)  = e^{-  \phi \gamma^0 \gamma^5} \, .
\end{eqnarray} 
The real and complex representations are physically equivalent. We review both because the dressed/composite framework couples the two formulations, with the additional degrees of freedom in the lifted solution coming from a complex potential in the standard framework.

\section{Mean-field analysis }
\label{meanfield1}

Our starting point is the Lagrangian density for $N$ species of relativistic fermions with finite chemical potential. In light of our analysis from the previous section, we incorporate self-interactions whose pairing fields exhaust the possible forms in $(1+1)$ dimensions. Physically, these are generated by scalar (meson) and difermion (superconducting) channels and has the form 
\begin{eqnarray}
\hspace{-1pc} \mathcal{L}  &=& \sum_{a}  \bar{\psi}_a   \left( i \gamma^\mu  \partial_\mu + \mu \gamma^0  \right)   \psi_a  + \frac{g_s}{2}  \left(  \sum_a  \bar{\psi}_a     \psi_a  \right)^2  + \frac{g_d}{2}  \sum_{a, b} \left(   \bar{\psi}_a  C   \bar{\psi}^T_b  \right)  \left(   \psi^{T}_b   C     \psi_a   \right) ,  \label{Lag1}
\end{eqnarray}
where the physical parameters $\mu$, $g_s$ and $g_d$ are the chemical potential, scalar and difermion couplings, respectively, $C$ is the charge conjugation operator and summation over flavor (species) indices $a, \, b$, is shown. We adopt an off-diagonal pairing convention for the difermion term to avoid issues when $C$ is chosen to be a symmetric matrix. Classical results are accurate when $N \gg 1$, so that fermions may acquire dynamically generated mass and dfermion expectation value through the interactions. In fact, the low-temperature mean-field limit of eq.~\eqref{Lag1} allows for several non-vanishing correlations:
\begin{eqnarray}
\rho = \langle \bar{\psi } \gamma^0 \psi \rangle \, , \;\;\; \sigma =  \langle \bar{\psi} \psi \rangle \, , \;\;\;  \Delta = \langle  \psi^T C \psi \rangle   \, , \;\;\;  \bar{\Delta}   = \langle  \bar{\psi} C \bar{\psi}^T \rangle  \, , 
\end{eqnarray}
respectively, the average density, chiral, difermion and conjugate difermion condensates. The four-fermion interactions can be decoupled by introducing the scalar and difermion auxiliary fields, $\sigma \in \mathbb{R}$ and $\Delta \in \mathbb{C}$, and by rescaling the coupling constants $g_s \to g_s/N$ and $g_d \to g_d/N$. Suppressing the species indices and summations, the Lagrangian may be expressed as 
\begin{eqnarray}
\hspace{-1pc} \mathcal{L}  &=&  \bar{\psi}   \left( i \gamma^\mu  \partial_\mu + \mu \gamma^0  \right)   \psi  - \sigma   \,  \bar{\psi}     \psi    -  \bar{\Delta } \, \psi^T  C  \psi - \frac{N}{2 g_s} \sigma^2 - \frac{N}{2 g_d} \Delta \bar{\Delta} . \label{Lag2}
\end{eqnarray}
Integrating out the fermion fields gives the effective action
\begin{eqnarray}
\hspace{-1pc} S_\mathrm{eff}  &=& - i \,  \mathrm{Tr}\,  \mathrm{ln}  \! \left( i \gamma^\mu  \partial_\mu + \mu \gamma^0 - \sigma -  \bar{\Delta} \gamma^1 \right)    - \int \!d^2x  \left(  \frac{ \sigma^2}{2g_s} + \frac{ \bar{\Delta}^2 }{ 2 g_d }   \right),  \label{Lag2-a}
\end{eqnarray}
where the trace is over coordinate, species, and spin degrees of freedom and we have taken $\bar{\Delta} \in \mathbb{R}$. The fermion determinant in the first term accounts for summing over fermion loops and the partition function can then be expressed as
\begin{eqnarray}
\mathcal{Z} = \int  \! \mathcal{D} \sigma \mathcal{D} \bar{\Delta} \; \mathrm{exp} \! \left(  i N S_\mathrm{eff}[ \sigma, \bar{\Delta }  ]     \right)
\end{eqnarray}
Uniform condensate solutions are obtained through the saddle-point equations 
 \begin{eqnarray}
 \frac{\partial S_\mathrm{eff} }{\partial \sigma} = 0  \;\;  \;    \Rightarrow \; \;   \; \mathrm{Tr} \int \frac{d^2p}{(2 \pi)^2} \frac{-i }{\slashed{p} + \mu \gamma^0 - \sigma_c - \bar{\Delta}_c \gamma^1  }  - \frac{\sigma_c}{g_s} &=& 0 \; , 
 \end{eqnarray}
 and
 \begin{eqnarray}
 \frac{\partial S_\mathrm{eff} }{\partial \bar{\Delta} }       \;\;  \;    \Rightarrow \; \;   \;        \mathrm{Tr} \int \frac{d^2p}{(2 \pi)^2} \frac{-i \,   \gamma^1}{\slashed{p} + \mu \gamma^0 - \sigma_c - \bar{\Delta}_c \gamma^1  }  - \frac{\bar{\Delta}_c}{g_d} &=& 0  \, , 
 \end{eqnarray}
 where the subscripts on the fields denote the classical solutions. Multiplying the top and bottom of each integrand by its denominator and taking the trace gives
  \begin{eqnarray}
 \int \frac{d^2p}{(2 \pi)^2} \frac{  2  \,  \sigma_c }{ (p_0 + \mu)^2 + ( p_1 + \bar{\Delta}_c )^2 + \sigma_c^2 }  &=&  \frac{\sigma_c}{g_s}  \; ,  \label{class1}
 \end{eqnarray}
 and
 \begin{eqnarray}
  \int \frac{d^2p}{(2 \pi)^2} \frac{  2  \,  \bar{\Delta}_c }{ (p_0 + \mu)^2 + ( p_1 + \bar{\Delta}_c )^2 + \sigma_c^2 }  &=&  \frac{     \bar{\Delta}_c}{g_d}  \; .  \label{class2}
 \end{eqnarray}
For a non-vanishing classical solution, the integrals can be cast into Euclidean polar form 
 \begin{eqnarray}
\frac{1}{2 \pi^2} \int_0^\infty dp' \int_0^{2 \pi} d\theta \,   \frac{ \sqrt{ p'^2 + \delta^2 - 2 \,  p' \delta \cos \theta }}{ p'^2 + \sigma_c^2} = \frac{1}{g} \, , \label{polar}
 \end{eqnarray}
 where we have incorporated the background shift $\vec{p}' \equiv \vec{p} - \vec{\delta}$, with $\vec{\delta} \equiv (\mu , \, i  \bar{\Delta}_c ) \Rightarrow  \delta = \sqrt{\mu^2 - \bar{\Delta}_c^2}$. Several features of eq.~\eqref{class1}-\eqref{polar}:
 \begin{enumerate}
 
 \item The condensates $\sigma_c$ and $\bar{\Delta}_c$ are simultaneously non-zero only when $g_s = g_d$. For general values of the couplings one or both condensates must vanish. 
 
 \item In the presence of a non-vanishing background (and $g_s \simeq g_d$) the UV limit of the integral ($p_0, \, p_1 \gg \mu , \, \bar{\Delta}_c$) reduces eq.~\eqref{class1} to the standard Gross-Neveu model, which diverges logarithmically. 
 
 \item  The angular portion of the integral in eq.~\eqref{polar} can be put into elliptic form of the second kind. For large fermion momentum the numerator in eq.~\eqref{polar} can be expanded for $p' \gg \delta$, so that the angle integrations can be performed term by term to yield
  \begin{eqnarray}
\frac{1}{2 \pi^2  }  \int_\epsilon^\kappa dp'   \frac{  p' }{ p'^2 + \sigma_c^2}       \, \left[  2 \pi  - \frac{\pi}{16} \left( \frac{\delta}{p'}\right)^2   -   \frac{15\,  \pi}{512} \left( \frac{\delta}{p'}\right)^4 - \dots     \right]      = \frac{1}{g} , \label{polar2}
 \end{eqnarray}
where UV and IR cutoffs are introduced such that $\kappa, \, \epsilon \gg \delta$. The first term in the expansion gives the usual Gross-Neveu result 
\begin{eqnarray}
   \frac{1}{ \pi }   \lim_{\kappa \to \infty }    \ln \sqrt{  \frac{  \kappa^2 + \sigma_c^2   }{  \epsilon^2 + \sigma_c^2  } } \, , \label{scalarres}
\end{eqnarray}
which alone yields the classical solution 
\begin{eqnarray}
\sigma_c^2 = \frac{\kappa^2 e^{-2 \pi/g_s} - \epsilon^2}{1 - e^{-2 \pi/g_s}} \, . 
\end{eqnarray}
For weak coupling and removing the IR cutoff we obtain the familiar Gross-Neveu result $\sigma_c  =  \kappa \, e^{-\pi/g_s}$. Fixing the left hand side, reveals the expected asymptotic freedom in $2$D. Integrating the second term in the expansion gives an additional logarithmic correction:
\begin{eqnarray}
    \frac{1}{\pi}  \lim_{\kappa \to \infty }    \ln\left(     \frac{\epsilon}{ \kappa}  \sqrt{  \frac{  \kappa^2 + \sigma_c^2}{    \epsilon^2 + \sigma_c^2  }  } \right)^{\delta^2/32\sigma_c^2}   \, .  \label{difermionres}
\end{eqnarray}
In the UV limit that we are considering here the chemical potential is large such that $\delta \gg \sigma_c$, and the combination of results in eq.~\eqref{scalarres} and \eqref{difermionres} leads to the result
\begin{eqnarray}
\bar{\Delta}_c =  \lim_{\kappa \to \infty  ,  \, \sigma_c \to 0} \sqrt{ \mu^2 + \frac{32 \pi \sigma_c^2}{g_d \ln \! \sqrt{  (1+ \sigma_c^2/\kappa^2)/(1+ \sigma_c^2/\epsilon^2)} }}    =    \sqrt{ \mu^2 - \frac{ 64 \pi  \epsilon^2}{ g_d }} \, . 
\end{eqnarray}
The lower bound for difermion condensate formation, $\mu_\mathrm{min}^d = 8 \epsilon ( \pi /g_d)^{1/2}$, is consistent with the in-medium IR cutoff being essentially the characteristic momentum at the Fermi surface edge. This result must also be consistent with our initial requirement that $\epsilon \gg \mu$. At the lower bound we then have $64 \pi/g_d \ll 1$, which says that this regime is characterized my large difermion coupling, consistent with the formation of difermion bound pairs.

\item At low chemical potential and well below the threshold for difernion formation, the theory is dominated by the IR region so that we may take $p, \,\bar{ \Delta}_c \ll \sigma_c, \, \mu$, in eq.~\eqref{class1}. To lowest order in the small quantities, eq.~\eqref{class1} reduces to
 \begin{eqnarray}
\frac{1}{2 \pi  }  \int_\epsilon^\kappa dp \frac{  2 \, p}{ \mu^2  + \sigma_c^2}      = \frac{1}{g_s} \, , \label{IR1}
 \end{eqnarray}
which gives a simple result after integrating and removing the IR cutoff:
\begin{eqnarray}
 \sigma_c = \sqrt{  \frac{\kappa^2 g_s}{ 2 \pi } - \mu^2} \, , \label{IR2}
 \end{eqnarray}
with the scalar condensate vanishing at $\mu_\mathrm{max}^s = \kappa (g_s/2 \pi)^{1/2}$. Our approach sets a lower bound $\mu_\mathrm{min}^s  > \kappa$, where $\sigma_c(\mu_\mathrm{min}^s) = \kappa \sqrt{ g_s/(2 \pi) -1 }$, which forces a minimum value for the scalar coupling $g_s$ for this region. Hence, the upper cutoff $\kappa$ is directly related to the energy of the mass gap.

 \end{enumerate}

 A complete treatment would require a renormalization-group analysis. Our preliminary treatment nevertheless reproduces the correct qualitative features for scalar and difermion condensation, e.g., the scalar condensate decreasing with $\mu$, vanishing at some critical point $\mu_\mathrm{max}^s$, with the difermion condensate increasing beyond some critical threshold $\mu_\mathrm{min}^d$. Finally, our approach shows that a coexistence region exists if $\mu_\mathrm{min}^d <   \mu_\mathrm{max}^s \Rightarrow     64 \pi/g_d < g_s/2 \pi$, which is satisfied when both couplings are large. Incorporating the constraint eq.~\eqref{class1}-\eqref{class2}, shows that couplings are pined in this region through $g_d = g_s \equiv g >  8 \sqrt{2} \,  \pi $. The exact values here are not rigorous results, but the essential behavior agrees with more elaborate treatments.

\bibliographystyle{JHEP}
\bibliography{refs_auto}
\end{document}